    \newcolumntype{P}[1]{>{\centering\arraybackslash}p{#1}}
    \newcolumntype{M}[1]{>{\centering\arraybackslash}m{#1}}
\def\BibTeX{{\rm B\kern-.05em{\sc i\kern-.025em b}\kern-.08em
    T\kern-.1667em\lower.7ex\hbox{E}\kern-.125emX}}
\theoremstyle{definition}
\newtheorem{mechanism}{Mechanism}
\theoremstyle{remark}
\theoremstyle{plain}
\newtheorem{theorem}{Theorem}
\newtheorem{lemma}{Lemma}
\begin{document}

\title{Decentralized Multimedia Data Sharing in IoV: A Learning-based Equilibrium of Supply and Demand}
\author{Jiani Fan, Minrui Xu, Jiale Guo, Lwin Khin Shar, Jiawen Kang, Dusit Niyato,~\IEEEmembership{Fellow,~IEEE}, and Kwok-Yan Lam,~\IEEEmembership{Senior Member,~IEEE}
 
\thanks{
Manuscript received April 6, 2023; revised Auguest 15, 2023; accepted 1 October 2023. Copyright (c) 2015 IEEE. Personal use of this material is permitted. However, permission to use this material for any other purposes must be obtained from the IEEE by sending a request to pubs-permissions@ieee.org.

Jiani Fan, Minrui Xu, Jiale Guo, Dusit Niyato, and Kwok-Yan Lam are with the School of Computer Science and Engineering, Nanyang Technological University, Singapore (e-mail: {jiani001, minrui001, jiale001}@e.ntu.edu.sg, {dniyato, kwokyan.lam}@ntu.edu.sg).

Lwin Khin Shar is with the School of Computing and Information Systems, Singapore Management University, Singapore (e-mail: lkshar@smu.edu.sg).

Jiawen Kang is with the School of Automation, Guangdong University of Technology, Guangzhou 510006, China (e-mail: kavinkang@gdut.edu.cn).
}
}

\markboth{IEEE Transactions on Vehicular Technology,~Vol.~XX, No.~XX, XXX~2023}
{}

\maketitle

\begin{abstract}
\textcolor{black}{The Internet of Vehicles (IoV) has great potential to transform transportation systems by enhancing road safety, reducing traffic congestion, and improving user experience through onboard infotainment applications. Decentralized data sharing can improve security, privacy, reliability, and facilitate infotainment data sharing in IoVs.} 
\textcolor{black}{However, decentralized data sharing may not achieve the expected efficiency if there are IoV users who only want to consume the shared data but are not willing to contribute their own data to the community, resulting in incomplete information observed by other vehicles and infrastructure, which can introduce additional transmission latency.}
\textcolor{black}{Therefore, in this paper, by modeling the data sharing ecosystem as a data trading market, we propose a decentralized data-sharing incentive mechanism based on multi-intelligent reinforcement learning to learn the supply-demand balance in markets and minimize transmission latency.}
Our proposed mechanism takes into account the dynamic nature of IoV markets, which can experience frequent fluctuations in supply and demand. We propose a time-sensitive \textcolor{black}{Key-Policy Attribute-Based Encryption (KP-ABE) mechanism} coupled with Named Data Networking (NDN) to protect data in IoVs, which adds a layer of security to our proposed solution. Additionally, we design a decentralized market for efficient data sharing in IoVs, where continuous double auctions are adopted. The proposed mechanism based on multi-agent deep reinforcement learning can learn the supply-demand equilibrium in markets, thus improving the efficiency and sustainability of markets. Theoretical analysis and experimental results show that our proposed learning-based incentive mechanism outperforms baselines {\color{black}by 10\%} in determining the equilibrium of supply and demand while reducing transmission latency {\color{black}by 20\%}. 
\end{abstract}

\begin{IEEEkeywords}
Internet-of-Vehicles, Communication Security, Incentive mechanism, Cryptography, Named Data Networking, Attributed-based Encryption
\end{IEEEkeywords}

\section{Introduction}

\textcolor{black}{The Internet of Vehicles (IoV) is a network of vehicles that enables the transmission of information between pedestrians, vehicles, and urban infrastructure \cite{nss,antsurvey,surveyIoV20} through collaboration between sensors, software, built-in hardware, and communication technologies.} The IoV network utilizes a variety of sensors, software, built-in hardware, and different types of connections to enable reliable and continuous communication\cite{nss}. With recent advancements in machine learning and automation, IoV systems are becoming increasingly sophisticated, offering capabilities such as real-time navigation guidance, automated driving, end-device connectivity, \textcolor{black}{and onboard infotainment services\cite{Karim2022}}. 

A significant accelerator for the commercialization of IoVs is the \textcolor{black}{emergence of smart infotainment systems\cite{9799964}}, such as the one in Tesla~\cite{tesla}. IoV infotainment systems play a critical role in this technology by delivering information and entertainment within vehicles via button panels, displays, and audio and video interfaces. \textcolor{black}{These systems offer a uniform user interface for both entertainment and driving assistance by connecting to onboard components through the Control Area Network (CAN)\cite{Kannadhasan_2021}}. IoV infotainment systems also serve as a platform for converting user input into messages that are transmitted over IoV networks via integrated Bluetooth, LTE, and Wi-Fi modules.
However, several challenges need to be addressed, such as the efficiency, security, and reliability of multimedia data sharing, which are essential to promote the full adoption of IoV systems.

Firstly, data sharing in IoV requires reliable and secure mechanisms to ensure the privacy and confidentiality of data. Neglecting to secure multimedia data transfer in IoV networks may accidentally create a point of easy entry for social engineering attacks, \textcolor{black}{in which attackers can psychologically manipulate a target into acting on their behalf \cite{fi11040089, Al-Sabaawi2021}}. For instance, attackers may use vehicle-to-vehicle infotainment communication to disseminate false information about traffic conditions, deceive drivers into heavily populated areas of a highway, and interrupt traffic management by reporting false traffic information. These attacks can pose significant risks to the safety and security of IoV systems, making it imperative to develop secure and reliable data-sharing mechanisms. 

Meanwhile, decentralized data sharing emerges as a promising solution to address the security challenges, as it offers enhanced security and reliability compared to centralized data sharing. 
However, decentralized data sharing may not achieve the expected efficiency if there are participants who only want to consume the shared data but are not willing to contribute their data to the community, resulting in incomplete information observed by other vehicles and infrastructure, which can introduce additional transmission latency. For instance, decentralized data sharing frequently relies on voluntary resource contribution, where individuals are willing to contribute their network resources for the benefit of others. To achieve better resource utilization efficiency, incentive mechanisms can be used to \textcolor{black}{balance the demand and supply of network resources through a peer-to-peer data trading market \cite{9359135,Yogarayan_Razak_Azman_Abdullah_2021}}. In this market, users who demand resources pay a price for the resources they get, while users who have extra resources are compensated for supplying them. Perfect market efficiency and the best allocation of excess network resources are achieved when no resource is undervalued or overvalued.

\textcolor{black}{To determine the best pricing and resource allocation strategy,}
some auction mechanisms have been proposed, but they are not yet suitable for continuous and decentralized markets. For instance, second-price auctions can result in low efficiency and high budget costs due to a lack of coordination between buyers and sellers~\cite{deshmukh2002truthful}. In a typical second-price auction, a buyer may end up paying more than they expected, as the price they pay is equal to the second-highest bid. This can lead to a reduction in buyer participation and, consequently, a decrease in market efficiency. In addition, double auctions are susceptible to collusion between buyers and sellers, which can lead to lower efficiency and reduced welfare for other participants~\cite{anufriev2013efficiency}. Therefore, double auctions may not be suitable for continuous and decentralized markets, as they require a central auctioneer to coordinate the bidding, which can result in additional delays and costs. Finally, all of these auctions also primarily focus on the value of sharing data in markets, they cannot reduce transmission latency during data transmission~\cite{hui2022collaboration}.

\textcolor{black}{To address these challenges, in this paper, we model the data-sharing ecosystem as a data trading market and propose a decentralized data-sharing incentive mechanism based on multi-intelligent reinforcement learning to learn the supply-demand balance in markets and minimize transmission latency.} Furthermore, we propose a time-sensitive KP-ABE encryption mechanism coupled with NDN for protecting data in IoV to enhance the efficiency of data distribution and add an additional layer of security to our proposed incentive mechanism.

The main contributions of this paper are summarised as follows:
\begin{itemize}
\item We develop a Multi-Agent Deep Reinforcement Learning (MADRL)-based incentive mechanism to encourage decentralized data sharing amongst vehicles and RSUs. Our experiments have shown that the MADRL-based mechanism can converge to satisfactory performance in the decentralized continuous data-sharing market.
\item \textcolor{black}{We develop a decentralized continuous doubled-size market design that seamlessly integrates with our time-sensitive KP-ABE system, enabling the secure and efficient distribution of IoV infotainment data.}
\item We leverage Named Data Networking (NDN) for efficient circulation of data among the IoVs \textcolor{black}{and provide resource caching at the Road Side Units (RSUs)}, where data packets are self-contained and independent of the location at which they can be retrieved and transferred.

\end{itemize}

The structure of this paper is in accordance with the following: 
We present related works in Section \ref{sec: related}, an overview of the system background in Section \ref{sec: overview}, details on the proposed time-sensitive KP-ABE scheme in Section \ref{sec: kp-abe}, the decentralized continuous double-sided market design in Section \ref{sec: decentralize}, a multi-agent deep reinforcement learning-based incentive mechanism in Section \ref{sec: incentive}, experimental results of our scheme in Section \ref{sec: experiment}, and a conclusion in Section \ref{sec: conclusion}.

\section{Related Work} \label{sec: related}

\subsection{Internet-of-Vehicles}
Among recent literature in IoV, reliable and scalable sharing of data has been a major concern \cite{reliableandscalable,routingalgo}. Various vehicular applications operate on shared information, such as cooperative awareness messages (CAMs) and basic safety messages (BSMs), to predict the behavior of other vehicles and optimize their decision-making \cite{hv2x}. However, building scalable and dependable communication networks in IoVs is difficult. The huge amount of data produced by vehicles and the dynamic nature of traffic flows make network elasticity a crucial factor to consider. This is exacerbated by the different latency tolerances that exist among the various types of messages that are communicated. For instance, real-time cooperative control and emergency information have strict time constraints, whereas infotainment applications can tolerate some latency\cite{reliableandscalable}.

\subsection{Caching With Named Data Networking}
\textcolor{black}{Infotainment data communication via mobile networks frequently relies on data caching techniques to achieve lower latency due to the high content volume of the data\cite{s22041387}}. 
Due to the ongoing quest for better audio and video quality as well as the constrained network bandwidth, the IoV system frequently calls for a higher transmission rate and greater flexibility in data distribution techniques. NDN stands out as one of the suitable choices for resource optimization in these networks, since it offers compatibility with current routing protocols while also optimizing the use of communication resources \cite{zhang2014named}. It provides information-centric networking and is directly applicable to existing IP services like the Domain Name Service (DNS) and inter-domain routing policies. 
Instead of identifying data packets with source and destination addresses, NDN nodes recognize them by their names. 
These features make it possible to cache material within the network for upcoming requests, \textcolor{black}{improving content mobility while doing away with the need for application-specific middleware\cite{9042284}}. 
\textcolor{black}{While NDN applications aim to integrate data-centric security through the secure binding of names to ensure integrity guarantees \cite{Named_Data_Networking_(NDN)_2021}, an expansion of the NDN framework is essential to implement access controls for limiting the data usage boundaries within the confines of a single application's context.}

  \begin{figure*}[htb]
    \begin{center}
        \includegraphics[width=0.8\linewidth]{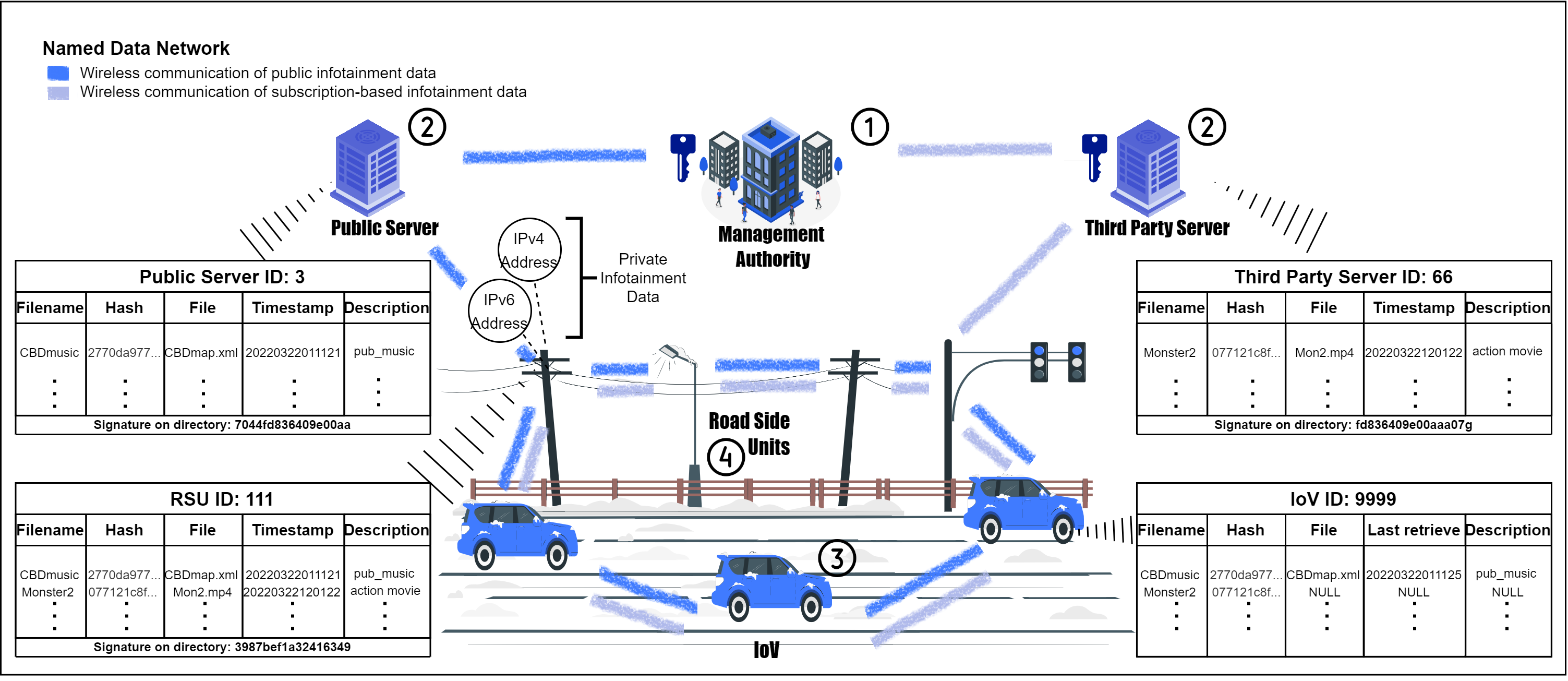}
        \caption{Illustration of the proposed NDN-based communication system\cite{nss}.}
        \label{fig:iov_context}
    \end{center}
    \vspace{-5mm}
    \end{figure*}
    
\subsection{Peer-to-peer Data Trading in IoV}
IoV data trading frequently involves many participants, including data sellers, buyers, and providers, each of whom has their own set of interests \cite{blockchainp2p, 9211724}. 
\textcolor{black}{Some auction mechanisms have been proposed for dealing with peer-to-peer data trading in the IoV, but they are not effective in dealing with continuous and decentralized markets. Second-price auction mechanisms are where the highest bidder wins, but only pays the second-highest bid value plus one cent. It typically results in low efficiency and high budget costs due to a lack of coordination. While double auctions with multiple sellers and buyers require a central auctioneer and are not suitable for the decentralized market~\cite{deshmukh2002truthful, anufriev2013efficiency}. Furthermore, these auction mechanisms focus mainly on maximizing the market value and cannot reduce transmission latency, making them unsuitable for large multimedia sharing in IoV. To address these challenges, in this paper, we propose a decentralized data-sharing incentive mechanism based on multi-intelligent reinforcement learning to learn the supply-demand balance in the market and minimize transmission costs.
latency.}

\section{Overview of System Background} \label{sec: overview}

\textcolor{black}{On-the-go infotainment services are essential for the widespread adoption of IoV systems and the speed of information dissemination and resource sharing are crucial factors in the overall experience of IoV users. \textcolor{black}{With the large file size and stringent latency requirements for IoV communication\cite{9748067}}, peer-to-peer sharing and caching are desirable options to speed up the performance of infotainment services. However, we need strategies to avoid free-rider behaviors, where users act in their interests and are unwilling to help, and to achieve a market balance where the supply and demand of data are matched.} 

Due to the spontaneous and open environment in which these smart vehicles operate, secure and speedy communication with strong privacy protection and user anonymity also becomes a crucial requirement. \textcolor{black}{Meanwhile, most of the security proposals for IoV require a trusted central authority and focus on the authentication, authorization, and identity management}, which take less consideration for \emph{data-centric} security protection~\cite{nss, 9130160}. Furthermore, the conventional assumption of a trusted central authority may not always hold in the context of IoV, where there is an overwhelming number of potential entry points for attacks. An external smart device attached to any vehicle, a USB connected to a charging station, or even a wireless device that can access the same IoV network could be a launchpad for an attack on the central server. And this single point of failure could lead to system-wide malfunctioning.
\textcolor{black}{Hence, traditional centralized information distribution systems, which rely on a single central authority for secure resource allocation, are unsuitable for the extensively interconnected IoV networks. At the same time, while a balanced Content Distribution Network could prevent a single point of failure, it has difficulty adapting to a highly mobile environment such as the IoV network and the presence of multiple stakeholders who have differing interests.}

\subsection{Our proposal}
In response to the aforementioned findings, we propose a decentralized data-sharing incentive mechanism based on multi-intelligent reinforcement learning to learn the supply-demand balance in the market and minimize transmission latency. \textcolor{black}{By adopting a decentralized approach, we can overcome the limitation of a single point of failure in a centralized approach.} Furthermore, to enhance data security in decentralized data-sharing, we propose a time-sensitive KP-ABE for sharing multimedia data in IoV networks. With our proposed approach, IoVs can enjoy secure and efficient infotainment services under the familiar pay-as-you-use subscription model with peer-to-peer sharing of subscription materials.

A typical IoV transport system is composed of four basic parts, as shown in Fig. \ref{fig:iov_context}. The management authority (1) is a central organization that continuously analyzes traffic conditions and implements traffic control measures. It is also in charge of providing its subsidiary networks with critical traffic information, such as wide-area networks with traffic signaling and roadside infrastructure. Commonly, \textcolor{black}{centrally or remotely located public and third-party servers} are used to store the data created here (2). Due to the high mobility of IoVs, roadside units (4) and nearby vehicles (3) frequently serve as data relays to promote effective traffic information exchange and provide a smooth connection to IoVs.

In our proposed system, shown in Fig. \ref{fig:iov_context}, \textcolor{black}{we employ an NDN system where files are identified by their names rather than IP addresses. File names can be retrieved from public directories in RSU or provided by content providers to user applications.}
\textcolor{black}{There are two types of servers: public servers that provide public infotainment content and third-party servers that additionally provide subscription-based infotainment content. These are some of the operations that servers perform:}

\begin{itemize}
    \item \textbf{Content directory:} All servers have an NDN directory of their infotainment content that includes the filename, hash value, file content, last modified timestamp, and a description of the file. The servers can share the whole directory or a subset of it with anyone on the resources that they are willing to share.
    \item \textbf{Content encryption:} For subscription services, third-party servers will encrypt subscription-based files using randomly generated 256-bit AES keys. Each AES key is encrypted using $\mathsf{Encrypt}(\mathrm{PK}, \mathcal{M}, \mathbb{T}_c, S)$ function in Section \ref{subsec: kp-abe} to produce a ciphertext $CT$. Afterwards, each $CT$ is appended to the corresponding encrypted file to produce a new file that is ready for circulation in the network. The hash value of the new file is re-calculated and stored as a new entry in the directory. This process is illustrated in Fig. \ref{fig:aes}.
    \begin{figure}[t]
        \begin{center}
            \includegraphics[width=0.65\linewidth]{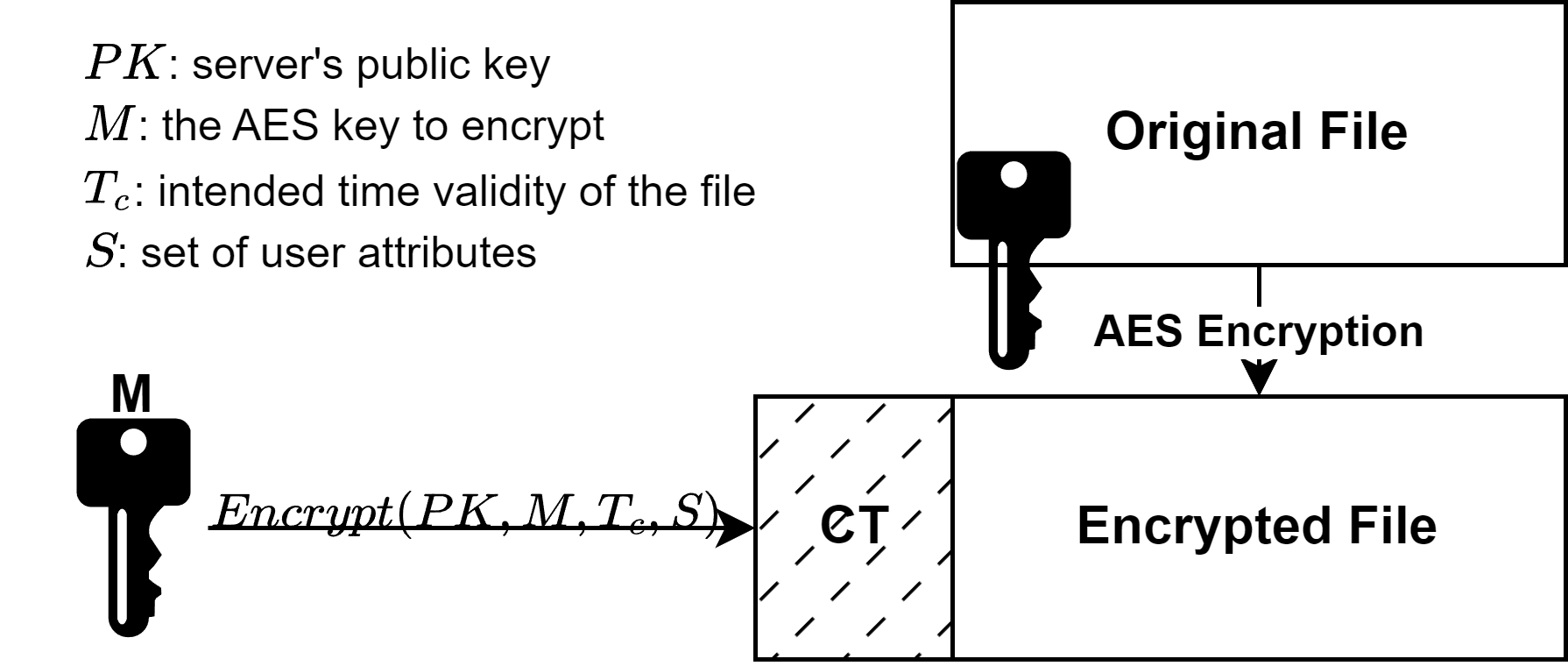}
            \caption{Content encryption at service provider's servers.}
            \label{fig:aes}
        \end{center}
        \vspace{-5mm}
        \end{figure}
        
    \end{itemize}
    \begin{itemize}
    
    \item \textbf{Resource distribution:} Directories of public infotainment content and encrypted subscription-based content are shared with RSUs. 
    \begin{itemize}
        \item A directory of popular subscription-based content is downloaded onto the user application installed on the vehicle when the user subscribes to the third-party service.
        \item The servers also fulfill any file requests from the RSUs. Note that only the encrypted version of subscription-based content will circulate on the network. Anyone in the network can own a copy of the encrypted file, but only valid subscribers can decrypt it.
        \item To speed up content distribution and reduce transmission delay, NDN-based content popularity and router level (CPRL) in-network caching strategy~\cite{8076542} is used to cache popular content on routers installed on RSUs.
    \end{itemize}
\end{itemize}

\textcolor{black}{RSUs are stationary infrastructures that act as an intermediary between servers and vehicles. They have an important role to play in facilitating data sharing between local vehicles. These are some of the operations that RSUs perform:}
\begin{itemize}
    \item \textbf{Storage and caching}: A directory containing resource names and file hashes is kept in RSU, digitally signed by the RSU to guarantee its integrity. Additionally, RSUs can store popular public infotainment and traffic data to improve user experience~\cite{khelifi2019named}. NDN-based CPRL in-network caching is also employed to cache the content according to its popularity level on the RSU's router to reduce content retrieval latency.
    \item \textbf{Content retrieval}: When the RSU receives a vehicle data request, it checks if it has a local copy of the requested resource. If it does not, it requests the resource from the relevant servers and then relays it to the vehicles. 
    \item \textbf{Auctioneer}: In the global decentralized IoV data-sharing market, each RSU holds a local submarket where buyers and sellers are vehicles under its coverage. The bids of multiple buyers and sellers can be aggregated in the RSU to determine allocation and pricing rules.
\end{itemize}

\textcolor{black}{IoVs are highly mobile vehicles that are traveling at different speeds and with different bandwidth conditions from time to time. There could be a significant amount of excess bandwidth when little communication or computation occurs within the vehicle, or there could be a high utilization of resources when it performs bandwidth-consuming operations. These are some of the operations that IoVs perform:}
\begin{itemize}
    \item \textbf{Directory update}: Vehicles can request an update for the directory from the RSUs and check the RSU's digital signature against the public infrastructures' pre-loaded public key certificates\cite{ELKHALIL2021101885}.
    \item \textbf{Content retrieval}: Vehicles are permitted to request files from nearby vehicles or any RSU that holds the files, whichever is closest to them. In this system, every vehicle can request or provide data at any time. If the communication cost for content retrieval from the RSU is higher than the cost of performing P2P data sharing, the vehicle is incentivized to participate in the local data-sharing market via a double auction.
    \item \textbf{P2P market participation}: To participate in the market, the buyers and sellers submit their buying bids and selling bids to their local RSUs, respectively. 
    \begin{itemize}
        \item Each vehicle that provides data aims to earn as much as possible without hindering its own performance while fulfilling other peers' requests.
        \item On the other hand, each vehicle that requests data seeks to receive the requested file as quickly as possible, at the lowest possible price, or a combination of both.
    \end{itemize}
    \item \textbf{Content decryption}: When the data sharing of subscription-based content is complete, the receiver will perform the $\mathsf{Decrypt}(CT, \mathrm{SK}_{(\mathrm{ID},A,T)})$ function in Section \ref{subsec: kp-abe} on the ciphertext $CT$ to obtain the AES key to the content. The AES key is then used to decrypt encrypted content to gain access.
    
\end{itemize}



In the following subsection, we first classify data exchange in an IoV network and illustrate the security and efficiency requirements of different data types, which leads to key design considerations for security schemes that have not yet been fully realized in many generic security schemes. Then, we introduce the gossip protocol that we used to collect market information for the construction of our incentive mechanism.

\subsection{Data Classification}

In general, there are six major types of data that are communicated in the IoV network when we classify data communications according to their security requirements.

\begin{enumerate}
\item \textbf{Vehicle-to-everything (V2X) private information exchange:} This includes private or sensitive information about the vehicle or the user that is not intended to be made available to the general public. Since these communications can reveal the user's personal information, it is crucial to preserve data integrity and secrecy. 

\item \textbf{Traffic control messages:} These are instantaneous traffic control messages, such as those for traffic lights and emergencies. These messages must meet strict standards such as \textcolor{black}{ISO 22737\:2021, IEEE 1609.0, SAE J2945/1\_202004} for message integrity and availability and are time-sensitive. Any attempt to obstruct them could have fatal repercussions \textcolor{black}{such as car accidents or delays in emergency handling that result in casualties.} 

\item \textbf{Public traffic data:} This includes geographical maps, alerts about infrastructure maintenance, traffic congestion status information, and even the locations of petrol stations, fire stations, and auto repair shops. There is no need for confidentiality protection since this information is meant for all drivers. On the other hand, for road users to make informed decisions, the accuracy and accessibility of this information are crucial.

\item \textbf{Publicly accessible infotainment data:} All publicly accessible infotainment data is included in this category, such as public websites, social media platforms, and publicly available content from service providers.

\item \textbf{subscription-based infotainment data:} This category includes subscription-based third-party infotainment content serviced by external service providers. Due to the subscription nature of these services, user authentication and subscription status verification are vital to avoid free riders—\textcolor{black}{unpaid users who enjoy content intended for subscribers only}. In addition, the quality of these paid services, such as low latency and availability, is important, which necessitates effective caching schemes.

\item \textbf{Private infotainment data:} This includes personal or restricted infotainment data that is not for public viewing. This will therefore require rigorous security measures to guarantee its confidentiality and integrity. 

\end{enumerate}


To maintain confidentiality and integrity for private data, communication data such as ``private infotainment data" and ``V2X private information exchange" should be safeguarded through strict authentication measures~\cite{9406119}. While integrity and accessibility are given top priority for ``traffic control messages" with minimal latency. On the other hand, public data types, such as public traffic and infotainment data, should be available to all. Thus, public data have less criticality for availability. The same methods that protect the aforementioned data kinds are inappropriate for these public data types. Similarly, subscription-based infotainment data require additional access restrictions based on subscription status to protect copyrights.
\textcolor{black}{In this KP-ABE scheme, we focus on the protection of subscription-based infotainment data sharing in the NDN-based IoV communication network. While public and private infotainment data should be secured, NDN networking provides an inherent capability to guarantee data integrity, which fulfills the security requirement for public infotainment data. At the same time, private data requires stringent security protection with less focus on data sharing, such as authentication, access control and encryption to confine access to one or a few individuals. Hence, our scheme aims to improve overall resource utilization efficiency by providing NDN-based access control for subscription-based infotainment data. By restricting access to only the subscribers, we can allow anyone in the network to own the same subscription-based resource and perform peer-to-peer trading with subscribers in need, thereby reducing content retrieval latencies without concern about the problem of free riders.}

\subsection{Peer-to-Peer (P2P) Gossip Algorithm}


Gossip algorithms are distributed algorithms that spread messages around in a network, where the receiver subsequently spreads the messages to its neighboring nodes, thereby propagating the messages in the network\cite{1238221}. By allowing messages to propagate through the network, we eliminate the need for a central authority to provide timely information about the local market, thereby reducing the complexity of the network and avoiding any single-point failure\cite{8637812}. Hence, IoVs in the network can obtain the latest transaction price in the local market, as well as any recent request for which they can bid, without any private communication between peers.

During each communication round, nodes are allowed to modify or append new information locally and spread the newly modified messages, depending on the protocol design. In this paper, we adopt a random gossip protocol~\cite{9539193} to circulate the latest transaction price and any new file request from neighbors. Each IoV stores and updates a local transaction price table that documents the recent transactions that occur near it and the current queuing time at the neighboring RSU (estimated waiting time for any new incoming request to be fulfilled). The same copy of the transaction price table is maintained throughout the network using gossip algorithms. 

\begin{figure*}[hbt]
    \begin{center}
    \includegraphics[width=1\linewidth]{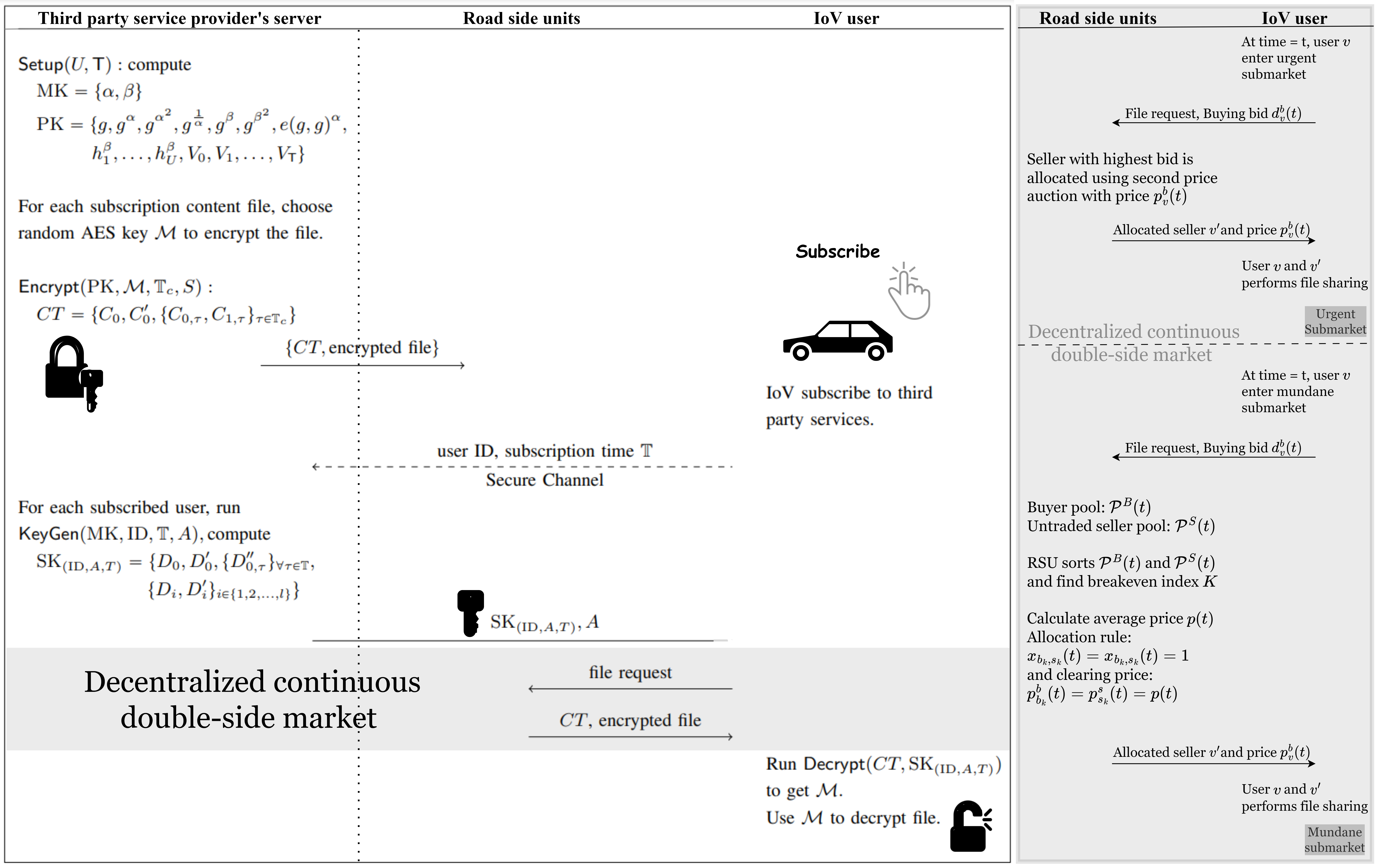}
    \caption{\textcolor{black}{Workflow of the time-sensitive KP-ABE scheme integrated with our proposed mechanism.}}
    \label{fig:exchanges}
     \end{center}
     \vspace{-5mm}
\end{figure*}

\section{Time-sensitive KP-ABE} \label{sec: kp-abe}
Security and efficiency are the primary concerns when we are transforming our vehicle transportation systems into IoV-based intelligent systems. In this section, we will introduce the time-sensitive KP-ABE proposed in previous work\cite{nss}, which aims to enhance the security of data sharing by limiting access to data based on validity time and user subscription attributes and reducing the risk of unauthorized access. We have considered the following factors when developing our time-sensitive KP-ABE for secure data sharing in the IoVs:

\paragraph{\textbf{Differentiated security}} Critical traffic information, such as traffic control and emergency messages, requires comprehensive cryptographic protections, while public data has a lower demand for confidentiality protection. Meanwhile, subscription-based data is semi-public and require access restrictions, where a group of subscribers can share a common set of media content. Thus, we can improve the system's resource efficiency in providing security protection by tailoring mechanisms according to the security requirements of data types. In our system, private or sensitive communication is protected using traditional authentication schemes\cite{ali2019authentication, 9552176}, public data is hashed, and the public data directory containing the filename and hash values is signed using the public key certificates of the traffic management authority for integrity protection. In addition, we employ a time-sensitive encryption scheme to avoid free riders for subscription-based content.

\paragraph{\textbf{Minimizing IoV network latency}} Traffic data is often time-sensitive and spontaneous, which puts enormous pressure on traffic systems to communicate information in a timely manner. However, because infotainment data files are often in megabytes or gigabytes, transmitting large files with a very small amount (e.g., 30–40 ms) of network latency will be a big challenge for the reliability of the network. Hence, we need a strategic content distribution system that could facilitate the easy distribution of large content files. Thus, we employ NDN and cache both subscription-based and public data at the RSUs to reduce the hop distance between users and the data.

\paragraph{\textbf{Re-usability of encrypted files}} To minimize network delay, it is better to retrieve infotainment data files from nearby devices. While all users of the road have access to information that is publicly available, any encrypted subscription content should be decryptable by all users whose policies comply with the decryption requirements. As a result, subscription content should follow the general guidelines that apply to privileged users. As a result, we propose a time-sensitive KP-ABE scheme that enables the decryption of files for all users with matching subscription attributes and whose subscription validity period fully overlaps the necessary time validity on files. This allows the exact duplicate to be sent throughout the network and decrypted upon request by any privileged user.





\subsection{Details of the KP-ABE scheme} \label{subsec: kp-abe}
The focus of our KP-ABE scheme is to facilitate secure and efficient communication of subscription-based infotainment data, which is often substantial in size. 
In summary, to protect content copyrights, service providers can encrypt their content using our KP-ABE scheme, where files are encrypted with user attributes that represent their subscription details. \textcolor{black}{With this scheme, users can only decrypt files whose validity time falls within their subscription time. When the validity of a user's access policy expires, the user loses the ability to decrypt the files until they resubscribe from the service provider. Coupled with our NDN architecture, any node in the network can own the file, but only those with a valid subscription can decrypt it. Therefore, our KP-ABE scheme could provide time-based access control and eliminate the problem of periodic user revocation due to expired subscriptions. Furthermore, our scheme provides a strong foundation for peer-to-peer data sharing, since any IoV on the network can own the encrypted resource and participate in data sharing.}

The stored third-party content directories \textcolor{black}{and cached resources} in the RSUs can be updated regularly by service providers through authorized communication. The encrypted file can be sent across untrusted channels, and its integrity can still be verified since both the filename and the hash value of the encrypted file have been disclosed to the RSUs via directories. When a user subscribes, third-party service providers can preload the most popular content \textcolor{black}{onto the user's application agent installed on the vehicle}, and an on-board directory of popular content can be deployed. The application could send a request to the RSU to help obtain the user's request from the server if it is not already on the list. The application could ask any nearby vehicle or RSU for the file once they have the filename and hash value of the required content.

\textcolor{black}{A detailed workflow on our proposed data exchange can be found in Fig. \ref{fig:exchanges} where the KP-ABE scheme \cite{nss} is integrated with our proposed mechanism.} Inspired by the work \cite{liu2018time, liu2020new}, we use a Hierarchical Identity-based Encryption (HIBE) technique to control the time validity of the infotainment files. In general, the time periods are represented by a hierarchical tree, which has one topmost root node and at most three-level non-root nodes. Each node in the first level of the tree represents a year, and its child in the second level represents a month in this year. The third-level nodes represent days.

Note that we also adopt the set-cover approach to select the minimum number of nodes to represent the valid time periods. The use of HIBE with the set-cover approach can effectively reduce the number of key generations required to represent each time period. In particular, a user should obtain the corresponding attribute in their access policies for those time periods. The user can only decrypt files whose validity time period falls within his/her subscription time period, meaning files that have a validity time period that is equal to or is a subset of their subscription time. When the validity of a user's access policy expires, the user loses the ability to decrypt until they resubscribe from the service provider.
\begin{itemize}
    \item \textcolor{black}{User A purchases a subscription service from 2022-JUL-01 to 2022-SEP-02.}
    \item \textcolor{black}{User A's valid time periods under the set-cover is \{2022-JUL, 2022-AUG, 2022-SEP-01, 2022-SEP-02\}.}
    \item \textcolor{black}{User A can decrypt files that have an expiration date between 2022-JUL-01 and 2022-SEP-02.}
    \item \textcolor{black}{User A cannot decrypt files after 2022-SEP-02.}
\end{itemize}

\begin{table}[t]
\centering
\caption{\textcolor{black}{A summary of math notation and symbols.}}
\renewcommand{\arraystretch}{1.1}
\begin{tabular}{p{2.5cm}<{\centering}|l}
\hline
Symbol & Description \\ \hline
$U$             & the number of attributes \\ \hline
$\mathsf{T}$    & the depth of the time tree \\ \hline  
$\mathrm{MK}$          & the master key \\ \hline
$\mathrm{PK}$          & public parameters \\ \hline
$\mathbb{G}_1$ & a bilinear group of prime order $p$  \\ \hline
$g$              & a generator of $\mathbb{G}_1$  \\ \hline
$h_1, h_2, \ldots,h_U$   & random elements chosen from $\mathbb{G}_1$ \\ \hline
$V_1, V_2,\dots,V_\mathsf{T}$   & random elements chosen from $\mathbb{G}_1$ \\ \hline
$\alpha$, $\beta$      & random numbers chosen from $\mathbb{Z}_p$ \\ \hline
$\mathrm{ID}$ & a user's pseudo-identity  \\ \hline
$\mathbb{T}$  & a set-cover of a user's decryptable time periods \\ \hline
$\tau$              & a z-ary representation of a time element \\ \hline
$A$           & a LSSS access structure  \\ \hline
$M$              & an $l\times n$ matrix  \\ \hline
$\rho$              & a mapping function \\ \hline
$\boldsymbol{v}$ & a random masking vector in $\mathcal{Z}_p^n$ \\ \hline
$\omega$              & an encryption exponent  \\ \hline
$\lambda_i (i=1,2,\dots, l)$ & the shares of $\omega$ \\ \hline
$\mathrm{SK}$ & a private key of a user \\ \hline
$\mathcal{M}$ & a plaintext message \\ \hline
$\mathbb{T}_c$   & a set of decryptable time periods of a message \\ \hline
$S$              & a set of attributes of the message  \\ \hline
$CT$          & a ciphertext \\ \hline

\end{tabular}
\label{tab:my-table}
\vspace{-5mm}
\end{table}

Our scheme consists of a 4-tuple of algorithms, denoted $(\mathsf{Setup}, \mathsf{KeyGen}, \mathsf{Encrypt}, \mathsf{Decrypt})$, of which the construction details are shown below. A summary of math notation and symbols is provided in Table \ref{tab:my-table}: 
\begin{itemize}[leftmargin=10pt]
\item $\mathsf{Setup}(U,\mathsf{T})$: It takes the number of attributes $U$ and the depth of the time tree $\mathsf{T}$ as input, outputs the public parameters $\mathrm{PK}$ and a master key $\mathrm{MK}$. In specific, given the depth $\mathsf{T}$, each time period is represented as a $z$-ary string $\{1,z\}^{\mathsf{T}-1}$, i.e, \{2022, 09, 22\}.
The algorithm chooses a bilinear group $\mathbb{G}_1$ of prime order $p$ with a random generator $g$, and randomly selects $U$ elements from the group, i.e. $h_1, h_2, \dots, h_U \in \mathbb{G}_1$. Besides, it also randomly chooses $\alpha,\beta \in \mathbb{Z}_p$ and $V_0, V_1, \dots, V_{\mathsf{T}} \in\mathbb{G}_1$. 
Then, it outputs
\begin{equation*}
\begin{aligned}
&\mathrm{MK}=\{\alpha, \beta\},\\
\mathrm{PK}=&\{g, g^{\alpha}, g^{{\alpha}^2}, g^{\frac{1}{\alpha}}, g^{\beta}, g^{\beta^2}, e(g,g)^{\alpha}, h_1^{\beta},\dots,h_U^{\beta}, V_0,\\ &V_1, \dots, V_{\mathsf{T}} \}.
\end{aligned}
\end{equation*}

\item $\mathsf{KeyGen}(\mathrm{MK}, \mathrm{ID}, \mathbb{T}, A)$: For a user with a pseudo-identity $\mathrm{ID}$ (A different pseudo-identity is generated by the service provider based on the user's identity for every purchase) and a set-cover of decryptable time periods, denoted as $\mathbb{T}$ that each of the elements in $\mathbb{T}$ can be represented as a $z$-ary representation $\tau=\{\tau_1, \tau_2, \dots, \tau_k\}\in\{1,z\}^k$ where $k<\mathsf{T}$, give the master key $\mathrm{MK}=\{\alpha, \beta\}$ and the LSSS access structure $A=\{M, \rho\}$, where $M$ is an $l\times n$ matrix and $\rho$ is a mapping function that maps each row of $M$ into an attribute. This algorithm outputs a private key $\mathrm{SK}_{(\mathrm{ID}, A, T)}$ for this user according to the following operations. At first, the algorithm chooses a random masking vector $\boldsymbol{v}=\{w, y_2,\dots,y_n\}\in \mathbb{Z}_p^n$ to share the encryption exponent $w$. Besides, it computes $\lambda_i = \boldsymbol{v} \cdot M_i$ for $\forall i\in \{1, 2, \dots, l\}$, i.e. $M_i$ is the $i$-th row vector of $M$. Here $\{\lambda_i\}$ are the shares of the secret $w$ according to $M$. Then this algorithm can calculate
\begin{equation}
\label{equ:sk}
\begin{aligned}
&D_0=e(g,g)^{\alpha w},~ D_0'=g^{\frac{w}{\alpha}},~\\ &\left\{D_{0,\tau}''=\left(V_0\prod_{j=1}^kV_j^{\tau_j}\right)^w\right\}_{\forall \tau\in \mathbb{T}},~\\
&D_i=g^{\beta\lambda_i},~
D_i'=\left(gh_{\rho(i)}^{\beta}\right)^{\lambda_1\mathrm{ID}}
\end{aligned}
\end{equation}
and produce the private key \textcolor{black}{$SK$ for a user with pseudo-identity $ID$, access structure $A$ and a time validity period of $T$. The user will receive this private key upon subscription, and it is stored on user applications that are installed on the vehicle. This private key can be used to decrypt files that are encrypted using $\mathsf{Encrypt}(\mathrm{PK}, \mathcal{M}, \mathbb{T}_c, S)$.}
$$\mathrm{SK}_{(\mathrm{ID}, A, T)}=\{D_0, D_0', \{D_{0,\tau}''\}_{\forall \tau\in \mathbb{T}}, \{D_i, D_i'\}_{i\in\{1,2,\dots,l\}}\}$$

\item $\mathsf{Encrypt}(\mathrm{PK}, \mathcal{M}, \mathbb{T}_c, S)$: This is the algorithm that uses the public key $\mathrm{PK}$ generated by the $\mathsf{Setup}$ algorithm to encrypt a plaintext message $\mathcal{M}$\footnote{$\mathcal{M}$ is generally the 256-bits AES key used to encrypt the actual content because the size of the actual content is generally larger than the maximum size of the message that can be encrypted by ABE schemes.} associated with a set of attributes $S$ and a set of decryptable time periods $\mathbb{T}_c$. 
The set $S$ consists of attributes such as movie rating and subscription tier (e.g., platinum, gold, and silver). 
The set $\mathbb{T}_c$ consists of some time elements $\tau=\{\tau_1', \tau_2', \dots, \tau_{k_{\tau}}'\}\in\{1,z\}^{k_{\tau}}$ where $k_{\tau}<\mathsf{T}$. The set $\mathbb{T}_c$ is determined by the service provider. For example, if the provider decides that the content is valid for a particular period, $\mathbb{T}_c$ will cover that period so that only users who subscribed for this period will be able to decrypt.

The algorithm chooses a random $x \in \mathbb{Z}_p$ and for $\forall \tau\in \mathbb{T}_c$, it chooses a random $v_{\tau}\in \mathbb{Z}_p$. It then computes 
\begin{equation}
\label{equ:ct}
\begin{aligned}
&C_0=\mathcal{M}\cdot e(g,g)^{\alpha x},~ 
C_0'=g^{{\alpha}^2x},~
C_{0,\tau}=g^{v_{\tau}},\\
&C_{1,\tau} = g^{\alpha x}g^{{\beta}^2}\left(V_0\prod_{j=1}^{k_y}V_j^{\tau_j'}\right)^{v_{\tau}}
\end{aligned}
\end{equation}
where $k_y=(g^{\beta} h_y^{\beta})^{-1}$ for $y\in S$. Finally, it outputs the ciphertext $CT=\{C_0,C_0',\{C_{0,\tau}, C_{1,\tau}\}_{ \tau\in\mathbb{T}_c}\}$ along with the time periods $\mathbb{T}_c$. 
\textcolor{black}{The ciphertext $CT$ contains the AES key to the encrypted resource files, and only users with validity within $\mathbb{T}_c$ are able to decrypt the $CT$ and obtain the AES key to the file.}

\item $\mathsf{Decrypt}(CT, \mathrm{SK}_{(\mathrm{ID},A,T)})$: This algorithm takes as input the ciphertext $CT$ and a user's private key $\mathrm{SK}_{(\mathrm{ID},A,T)}$, and outputs $\bot$ if any one of the following situations occurs:
\begin{enumerate}
    \item $S$ does not satisfy the access structure $A=\{M,\rho\}$.
    \item $T$ is not completely covered in $\mathbb{T}_c$, i.e. $\tau_T$ and all its prefixes are not in $\mathbb{T}_c$.
\end{enumerate}
Otherwise, let $I=\{i:\rho(i)\in S\} \subset \{1,2,\dots,l\}$, there exists a set of constants $\{\omega_i\in\mathbb{Z}_p\}_{i\in I}$ satisfying that $\sum_{i\in I}\omega_i \lambda_i=w$, where $\lambda_i$ are valid shares of a secret $w$ according to $M$. Finally, this algorithm can decrypt $CT$ as
$$\frac{C_0\cdot e(D_0'',C_{0,\tau}\cdot e(C_0',D_0'))}{e(C_0',g^{1/\alpha})\cdot\prod_{i\in I}\left(e\left(C_{1,\tau},(D_i')^{\frac{\omega_i}{\mathrm{ID}}}\right)\cdot e(D_i,k_{\rho(i)})^{\omega_i}\right)}.$$ \textcolor{black}{By performing this step, the user can obtain the AES key to the encrypted resource file and thereby decrypt the resource file and access the content.}
\end{itemize}

With our time-sensitive KP-ABE scheme, we provide the necessary security protection to perform decentralized data sharing in the IoVs.
\textcolor{black}{While our KP-ABE and the adoption of NDN serve as the framework for peer-to-peer information exchange, sending large media files would require a lot of bandwidth and time, which might prevent the sender from using such bandwidths for other purposes during the transmission period. In spite of the fact that such a peer-to-peer system increases the efficiency of the market, it is challenging to persuade individuals to adopt it without incentives. With our incentive mechanism, our architecture can be more practical and effective by encouraging users to participate in the peer-to-peer data-sharing system.}
In the following sections, we will present the market design and incentive mechanisms for encouraging user participation in decentralized data sharing.

\section{Decentralized Continuous Double-side Market Design} \label{sec: decentralize}

The incentive mechanism aims to encourage the decentralized sharing of infotainment data among vehicles and RSUs in the proposed network, which is illustrated in Fig.~\ref{fig:iov_context}. In the NDN network, large infotainment files can be divided into multiple smaller chunks tagged using their filename and hash, and vehicles can request only those chunks that they do not have. Vehicles can request files from RSUs via V2I communication or from other vehicles that already have the files via V2V communication. However, due to the dynamic traffic environment, it is hard to design mechanisms that both maximize the total utility of vehicles and minimize transmission latency during data sharing. Therefore, in this paper, we first design a continuous decentralized double-sided market that is hierarchically composed of the urgent submarket and mundane submarket for secure data sharing in IoVs. Then, we propose an intelligent mechanism based on multi-agent deep reinforcement learning that achieves the balance of supply and demand in the market by learning entry strategies for buying vehicles. 

We consider a continuous market model, where the time in the system $\mathcal{T}=\{1, \ldots, t, \ldots, T\}$ is divided into $T$ time slots. In this market model, learning agents can choose to enter or exit the market to obtain maximum utility, and this interaction is structured as a non-cooperative game that emphasizes learning. In this game, each participant (vehicle) aims to maximize their credit earnings by utilizing excess bandwidth onboard. The credit mentioned here indicates the credit earned for every kilobyte of data sent to the requester, either by relaying data on behalf of the peer or by directly transferring data to the requester. 

There are three types of participants in our data sharing scheme: a) data servers (including public and third-party infotainment service providers); b) roadside units (RSUs) and c) vehicles. These participants are introduced in detail below.


\subsection{Data Servers}
{\color{black}In this system, the cloud data server provides mainly two types of infotainment content, i.e., public infotainment content and subscription infotainment content. Public infotainment content is available through cloud data servers and delivered to vehicles via RSUs. Subscription infotainment content is provided by third-party servers and is optionally cached at the RSUs. Third-party service providers may utilize the storage of RSUs to cache popular contents and reduce the transmission delay of their content to their subscribers.}

\subsection{Roadside Units (RSUs)}
RSUs are resourceful and can handle all data requests from vehicles. In this system, we consider a set of $N$ RSUs $\mathcal{N}$. {\color{black}As illustrated in Fig.~\ref{fig:mechanism}, in the global decentralized IoV data-sharing market, each RSU holds a local submarket where buyers and sellers are vehicles under its coverage.}  When the RSU receives a data request, it checks whether it has a local copy of the requested resource. If it does not, it requests the resource from the relevant servers and then relays it to the vehicles. This indicates that there may be a significant {\color{black}waiting delay} when many requests are made to the RSUs. Therefore, it is necessary to inform vehicles of the expected wait time to fulfill a request at the RSUs so that they can consider this.
When vehicles request the data from RSUs, there is a queuing latency $l_n$ for RSU $n \in \mathcal{N}$.
For each transmission of infotainment content, transmission latency is also considered part of the cost of the communication. \textcolor{black}{The vehicles are encouraged to perform peer-to-peer data sharing when the communication cost for data retrieval is lower than that of the RSU. RSUs also play an important role as the auctioneer for their local content-sharing market, where the bids of multiple buyers and sellers can be aggregated at the RSU to determine the allocation and pricing rules.}

\subsection{Vehicles}
In this system, we consider a set of $V$ vehicles $\mathcal{V}=\{1,\ldots, V\}$ where vehicles travel at different speeds on the road under the coverage of different RSUs. At each time slot $t$, if vehicle $v$ participates in one of the local markets of the nearest RSU $n$ then $g_{v,n}(t)=1$, and otherwise $g_{v,n}(t)=0$. The vehicles can request data or share data with other vehicles within the local market. 

At time slot $t$, the set of vehicles in the system is divided into the set of buyers and the set of sellers according to their preference in trading, i.e., $\mathcal{V} = \mathcal{V}^B(t) \bigcup \mathcal{V}^S(t)$, where $\mathcal{V}^B(t)$ represents the set of vehicle buyers and $\mathcal{V}^S(t)$ represents the set of vehicle sellers. The winning vehicle buyers in the $\mathcal{V}^B(t)$ receive the content from the corresponding vehicle sellers. Meanwhile, the losing vehicle buyers request the content directly from the RSUs regardless of waiting and transmission delay.  The transmission rate of direct communication between RSU $n\in \mathcal{N}$ and vehicle $v \in \mathcal{V}$ is $R_{n,v}^{d}(t)$.
\begin{itemize}
\item Each vehicle has an NDN directory that consists of the file names and hash values of the files, \textcolor{black}{i.e. the tables shown in Fig. \ref{fig:iov_context}}. Entries of public information in the directory are obtained from public servers, while entries of subscription-based resources are obtained from third-party service providers.
\item Vehicles can check for local resources or send communication requests for infotainment content when the user on board initiates the request.
\item Vehicles can have different bandwidths and may be occupied with a different amount of {\color{black}communication bandwidth at different time slots}. Hence, the willingness of an IoV peer to relay files for its peers differs according to the excess bandwidth of the vehicle from time to time. 
\end{itemize}

In this system, every vehicle can either request or provide data. Each vehicle that provides data aims to earn as much as possible without hindering its performance while fulfilling other peers' requests. On the other hand, each vehicle that requests data seeks to receive the requested file as quickly as possible, at the lowest possible price, or a combination of both. Vehicles that belong to the set of sellers can also act as relays to send data to vehicles in the set of buyers. Thus, we can obtain the cooperative transmission rate between vehicle $v \in \mathcal{V}^B(t)$ and relay vehicle $\bar{v} \in \mathcal{V}^S(t)$ as $R^{c}_{v,\bar{v}}(t)$. At a given time slot $t$, vehicle $v$ has a valuation $u_v(t)$ for the opportunity of content sharing. This valuation is affected by the size of the content $c_v(t)$ when the vehicle is in the set of buyers or by transmit power $p_v(t)$ when the vehicle is in the set of sellers. In addition, the valuation of buyers follows the economic law of ``diminishing marginal returns," while the valuation of sellers follows the economic law of ``increasing marginal costs.” {\color{black} Therefore, the utility function of requesting vehicle $v\in\mathcal{V}^B(t)$ can be represented as
\begin{equation}
    \mu_v(t)(b_v(t)) = u_v(t) - p_v^b(t),
\end{equation}
where $b_v(t)$ is the buying bid of vehicle $v$ and $p_v^b(t)$ is the buying charge of vehicle $v$. In addition, the utility function of selling vehicle $v\in\mathcal{V}^S(t)$ can be represented as
\begin{equation}
    \mu_v(t)(b_v(t)) = p_v^s(t) - u_v(t),
\end{equation}
where $b_v(t)$ is the selling bid of vehicle $v$ and $p_v^b(t)$ is the selling revenue of vehicle $v$.}

In some cases, the relay vehicle can help deliver the requested information. For instance, if a requester is too far away from a vehicle that has the requested information, the vehicle may ask another vehicle that is closer to the requester to deliver the information. In this case, the price offered by the requester for the vehicle should include the cost of the intermediary vehicle for delivering the information. In the next section, we will introduce our approach using this market design.

\section{Multi-agent Deep Reinforcement Learning-based Incentive Mechanism} \label{sec: incentive}

In this system, we consider a decentralized market, i.e., there is no centralized auction but each RSU acts as an auctioneer for its local content-sharing market. The system described here is used for decentralized content-sharing markets, where each RSU acts as an auctioneer for its local market. {\color{black}The reason is that the market is double-sided and the bids of multiple buyers and sellers can be aggregated at the RSU to determine the allocation and pricing rules.} {\color{black} The MADRL approach involves traders acting as intelligent agents that learn participating strategies by interacting with local markets. This allows for flexibility in selecting strategies for each individual market based on its current status~\cite{liu2022welfare}.}

In this system, the system detects a user's request for content, which then sends a request to the outside. The requested content may be available from some users in the local market, who have {\color{black}extra bandwidth} and are willing to participate in the resource transfer. To participate in the market, the buyers and sellers submit their buying bids and selling bids to their local RSUs, respectively. Before calculating the allocation and prices, the auctioneers of the RSUs need to collect information on the local market using gossip to provide the supply and demand matrix.
The allocation rules $\Pi$ consist of the supply matrix $X(t) \subseteq \{0,1\}^{|\mathcal{V}^B(t)|\times \mathcal{V}^S(t)|} = \{\mathbf{x}_1(t),\ldots,\mathbf{x}_{|\mathcal{V}^B(t)|}(t)\}$ and demand matrix $Y(t)\subseteq \{0,1\}^{|\mathcal{V}^B(t)| \times | \mathcal{V}^S(t)|} = \{\mathbf{y}_1(t),\ldots,\mathbf{y}_{|\mathcal{V}^S(t)|}(t)\}$. The supply vector of selling vehicle $v'$ can be represented as $\mathbf{y}_{v'} =\{y_{v',1}(t), \ldots, y_{v',|\mathcal{V}^S(t)|}(t)\}$, while the demand vector of buying vehicle $v$ can be represented as $\mathbf{y}_{v} =\{y_{v,1}(t), \ldots, y_{v,|\mathcal{V}^B(t)|}(t)\}$. With this information, the RSUs can calculate the allocation and prices based on the auction mechanism $\mathcal{M}=(\Pi, \Psi)$, where $\Pi=(X,Y)$ is the allocation rules and $\Psi=(\mathbf{p}^b, \mathbf{p}^s)$ is the pricing rules. For buyers, a truthful bid $b_v^t$ corresponds to their true valuation $u_v$, meaning that $b_v(t) = u_v$. The auction mechanism ensures that their payment $p^b_v(t)$ is less than or equal to their true valuation $u_v$, i.e., $p^b_v(t) \leq u_v$. For sellers, a truthful ask $a_v(t)$ represents their true valuation $u_v$, such that $a_v(t) = u_v$. The auction mechanism guarantees that their revenue $p^s_v(t)$ is greater than or equal to their true valuation $u_v$, i.e., $p^s_v(t) \geq u_v$. The pricing rules can be represented by the pricing vector of buyers $\mathbf{p}^b(t)$ and the pricing vector of sellers $\mathbf{p}^s(t)$.

\subsection{Problem Formulation}

{\color{black}To balance the demand and supply while reducing transmission delay }among buyers and sellers, i.e., balance the supply and demand, the mechanism needs to determine the allocation rules and pricing rules under the constraints of individual rationality (IR) and truthfulness. Individual rationality in the auction means that every buyer and seller interested in the auction benefits from their participation, i.e. each participant receives a non-negative utility from their involvement. In particular, when vehicle $v$ is a buyer, it pays a price $p_v^B$ that is the same as or lower than its value $u_v$, i.e., $p_v^B \leq u_v$. When vehicle $v$ is a seller, it receives a payment $p_v^S$ that is the same as or higher than its value $u_v$. 

Truthfulness in auctions refers to the property where buyers and sellers submit bids that reflect their true valuations. This ensures that the auction results are accurate and efficient, with all participants providing an honest assessment of the value of the goods or services being traded. For buyers, a truthful bid $b_v(t)$ corresponds to their true valuation $u_v$, meaning that $b_v(t) = u_v$. The auction mechanism ensures that their payment $p^b_v(t)$ is less than or equal to their true valuation $u_v$, i.e., $p^b_v(t) \leq u_v$. For sellers, a truthful ask $a_v(t)$ represents their true valuation $u_v$, such that $a_v(t) = u_v$. The auction mechanism guarantees that their revenue $p^s_v(t)$ is greater than or equal to their true valuation $u_v$, i.e., $p^s_v(t) \geq u_v$. In the double-sided market, truthfulness is achieved by requiring that buyers' values for the item being auctioned are greater than or equal to the price they will pay for it and that sellers' prices are greater than or equal to the value they place on the item being sold.

The global social welfare $SW(t)$ at time slot $t$, i.e., the sum of the values of allocated sellers and buyers, can be calculated as 
\begin{equation}
\begin{aligned}
    SW(t) =& \sum_{v\in \mathcal{V}^B(t)}\sum_{v'\in \mathcal{V}^S(t)} x_{v,v'}(t) u_v(t) \\&+ \sum_{v\in \mathcal{V}^B(t)} \sum_{v'\in \mathcal{V}^S(t)} y_{v,v'}(t) u_{v'}(t).
\end{aligned}
\end{equation}
Social welfare can be used to evaluate the efficiency of the mechanism, where social welfare is highest when the balance of supply and demand can be achieved. This indicates all the potential demands of buyers, i.e., requesting vehicles, and all the potential supplies of sellers, i.e., responding vehicles, can be matched and realized.
In addition, the total transmission latency $L(t)$ at time slot $t$ can be calculated as
\begin{equation}
\begin{aligned}
    L(t) = &\sum_{v \in \mathcal{V}^B(t)} \sum_{n \in \mathcal{N}} g_{n,v}(t) \left[1-\sum_{v' \in \mathcal{V}^S(t)} x_{v',v}(t)\right] \frac{c_v(t)}{R^d_{n,v}(t)}  \\&+ \sum_{v \in \mathcal{V}^B(t)}\sum_{v' \in \mathcal{V}^S(t)} x_{v',v}(t) \frac{c_v(t)}{R^c_{v',v}(t)},
\end{aligned}
\end{equation}
where the total transmission latency consists of the direct communication latency between RSUs and vehicles and the cooperative communication latency between vehicles and vehicles. {\color{black}The direct communication latency is calculated as the sum of the transmission latency of losing requesting vehicles that need to request the content directly from the RSU. Meanwhile, the cooperative communication latency is calculated as the sum transmission latency of winning requesting vehicles that received the content from relaying vehicles.}

To maximize the social welfare in the IoV data-sharing market and minimize the latency during data transmission, the optimization problem can be formulated as

\begin{maxi!}|s|[2]<b>
    {\mathcal{M}}{\frac{1}{T} \sum_{t\in \mathcal{T}} \big(SW(t) - L(t)\big)}{}{}
    \addConstraint{\sum_{n\in\mathcal{N}} g_{v,n}(t)}{=1}{\forall v\in \mathcal{V}, t \in \mathcal{T}\label{con1}}{}{} 
    \addConstraint{\sum_{v\in \mathcal{V}^B(t)} y_{v',v}(t)}{\leq 1}{\forall v'\in \mathcal{V}^S(t), t \in \mathcal{T} \label{con2}}{}{}
    \addConstraint{ x_{v',v}(t)}{\leq y_{v',v}(t)}{\forall v \in \mathcal{V}^B(t), v' \in \mathcal{V}^S(t), t \in \mathcal{T} \label{con3}}{}{}. 
\end{maxi!}

The constraint \eqref{con1} indicates that each vehicle can participate in one local market at any time. The constraint \eqref{con2} guarantees that the resources of RSUs and selling vehicles at any time can only be sold once. Finally, the constraint \eqref{con3} indicates the allocation variables of the mechanism.

\begin{figure}[t]
    \centering
    \includegraphics[width=0.8\linewidth]{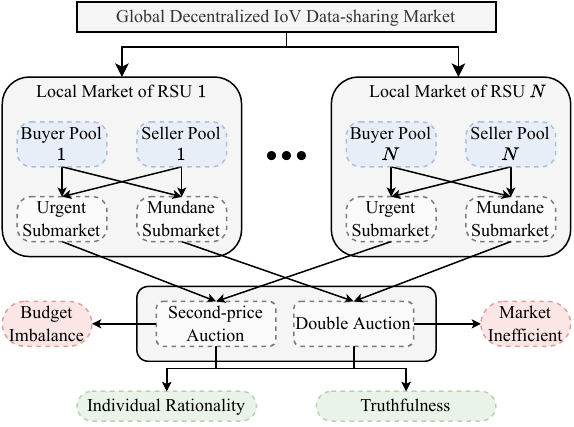}
    \caption{The workflow of the proposed hierarchical auction-based mechanism for the decentralized IoV data-sharing market, where buyers can enter either urgent submarket or mundane submarket by indicating their strategies.}
    \label{fig:mechanism}
    \vspace{-5mm}
\end{figure}
\subsection{Mechanism Design}
In this subsection, we provide a detailed description of the proposed hierarchical auction-based mechanism. Each RSU sets up a local market within its coverage area, which includes all vehicles within its coverage. {\color{black} Within each local market, there are two types of submarkets, i.e., the single-side urgent submarket and the double-side mundane submarket. In the urgent submarket, the transactions can be cleared as the buying bids and the selling bids are matched. In the mundane submarket, transactions are determined periodically. Therefore, when the traders are urgent for their content, they participate in the urgent market for instant transactions. When the traders are mundane, they participate in the mundane market and wait for the clearance of the market. In these local markets, we consider participants as passive sellers and active buyers. The passive sellers indicate that the relaying vehicles passively participate in the hierarchical submarket of the RSU. Active buyers can select the submarket in which they are willing to participate in.}  At each time slot $t$, buying vehicle $v$ submits its buying bid $d^b_v(t)$ to request content, and selling vehicle $v'$ submits its selling bid $d^s_{v'}(t)$ for the content response.

The sellers in the local submarket of RSU $n$ are organized into a seller pool $\mathcal{P}^S_{n}(t) = \{ v | \sum_{v\in \mathcal{V}^B(t)} y_{v',v}(t) = 0, g_{n,v'}(t)=1, v'\in \mathcal{V}^S(t)\}$ to passively respond to requests from buyers who arrive continuously in the market. The selling bids of sellers in the seller pool are in a seller value pool, which can be represented as $\mathcal{D}^S_{n}(t) = \{ d^s_{v'}(t) | \sum_{v\in \mathcal{V}^B(t)} y_{v',v}(t) = 0, g_{n,v'}(t)=1, v'\in \mathcal{V}^S(t)\}$. Without loss of generality, we assume that the selling bids in the seller value pool are sorted by their bids $d_{v}^s(t) \leq d_{v+1}^s(t),  \forall v, v+1 \in \mathcal{P}^S_{n}(t)$.
The buying vehicles that enter the mundane submarket are gathered into the buyer pool $\mathcal{P}^B_{n}(t)$ and their buying bids are gathered into the buyer value pool $\mathcal{D}^B_{n}(t)$. 
Without loss of generality, we assume that the selling bids in the seller value pool are sorted by their bids $d_{v}^b(t) \geq d_{v+1}^b(t),  \forall v, v+1 \in \mathcal{P}^B_{n}(t)$.
In the urgent submarket, the seller pool can promptly respond to incoming purchase requests from buying vehicles $\mathcal{P}^B_n(t)\backslash \mathcal{P}^B_{n}(t)$. Meanwhile, when buyers enter the mundane submarket, they can be organized into a buyer pool $\mathcal{P}^B_{n}$ and then periodically cleaned for the bilateral market. 
\begin{mechanism}
In the mechanism, the urgent submarket on the buyer side uses a second-price auction, while the mundane submarket on both sides uses McAfee's mechanism~\cite{mcafee1992dominant}. Below are the detailed allocation and pricing rules:
\begin{enumerate}

\item Single-side Urgent Submarket: The auctioneer then determines the allocation of the task between sellers and the transaction price using the following steps. First, the auctioneer evaluates the seller pool and modifies a supply and demand matrix for vehicle $v'$. The seller with the highest bid is allocated, which can be calculated as
    \begin{equation}
    x_{v,v'}(t) = y_{v',v}(t) = 1_{\{d^s_{v'}(t)> \max\{\mathcal{D}^S_{-v'}(t)\}\}},
    \end{equation}
    where $\mathcal{D}^S_{-v'}(t)$ indicates the highest price in the seller pool without seller $v'$.
    Then, the auctioneer determines the transaction price for buying vehicle $v$ in the urgent market based on the second-price sealed-bid auction, which can be represented as
    \begin{equation}
        p_{v}^b(t) = \sum_{v' \in \mathcal{P}^S(t)} x_{v,v'}(t) \cdot \max\{\mathcal{D}^S_{n,-v'}(t)\},
    \end{equation}
    and the revenue received by the selling vehicle $v'$ can be represented as
    \begin{equation}
        p_{v'}^s(t) =y_{v',v}(t) \max\{\mathcal{D}^S_n(t)\},
    \end{equation}
    where $\mathcal{D}^S_{n,-v'}(t)$ indicates the selling bids in the selling pool without the bid of selling vehicle $v'$.
\item Double-side Mundane submarket: The double-side mundane submarket consists of the buyer pool $\mathcal{P}^B(t) = \{v|g_{v,n}=1, \sum_{v'\in \mathcal{V}^S(t)} x_{v,v'}(t)=0, v\in \mathcal{V}^B(t)\}$ and the untraded sellers in seller pool $\mathcal{P}^S(t)$ is cleared periodically. Based on McAfee's mechanism~\cite{mcafee1992dominant}, the auctioneer determines the allocation rule and the pricing rule as follows. For the buyer pool and seller pool at time slot $t$, the auctioneer sorts the buyer and sellers in the pools in the natural and finds the breakeven index $K$ in $\mathcal{D}^S_{n}(t)$ and $\mathcal{D}^B_{n}(t)$. Then, the auctioneer calculates the average price $p(t)=(b_{k+1}+s_{k+1})/2$. If $\sum_{v\in \mathcal{P}^B(t)}1_{\{b_v(t) \geq p(t)\}} = K$ and $\sum_{v'\in \mathcal{P}^S(t)}1_{\{a_v(t) \leq p(t)\}} = K$, then for the first $K$-th buyers and sellers in the buyer pool and the seller pool, the allocation rules are set to $x_{b_k, s_k}(t) = x_{b_k, s_k}(t) = 1$, for $k = 1,\ldots, K$. The pricing rule indicates that the clearing price is set to $p_{b_k}^b(t) = p_{s_k}^s(t) = p(t)$, for $k = 1,\ldots, K$. 
Otherwise, the first $K-1$-th sellers trade for $s_k$ and the first $k-1$ buyers trade for $b_k$ as in the trade-reduction mechanism. The allocation rules are set to $x_{b_k, s_k}(t) = x_{b_k, s_k}(t) = 1, k = 1,\ldots, K-1$. The pricing rule indicates that the clearing price is set to $p_{b_k}^b(t) = p_{K}^b(t)$ and $p_{K}^s(t) = p_{K}^s(t)$, for $k = 1,\ldots, K-1$. 
\end{enumerate}
\end{mechanism}

After all the local markets are clear, the local auctioneers of RSUs can measure the local budget cost for its local market. The local budget cost $\beta_n(t)$ of RSU $n$ can be calculated as 
\begin{equation}
\begin{aligned}
    \beta_n(t) = &
    \sum_{v\in\mathcal{V}^B(t)} g_{v,n}(t) x_{v,v'}(t) p_v^b(t) \\&- \sum_{v'\in\mathcal{V}^S(t)}g_{v',n}(t) y_{v,v'}(t) p_v^s(t)
\end{aligned}
\end{equation}
{\color{black}Then, the total budget cost in the global decentralized market is the sum of the local budget cost $\beta_n(t), \forall n\in\mathcal{N}$, which can be calculated as}
\begin{equation}
    \beta(t) = \sum_{n\in\mathcal{N}} \beta_n(t).
\end{equation}

To minimize the total budget cost while maintaining social welfare in the decentralized market, we then let buying vehicles act as learning agents to learn submarket participating strategies for the equilibrium of supply and demand.

\subsection{Partially Observable Markov Decision Process}

In this system, each buyer is considered a learning agent in a Partially observable Markov decision process (POMDP), which can be characterized by the following components:

\subsubsection{Observation} The observation can be extracted from the state $S(t)$ of the system. The observation of vehicle $v$ at each time step $t$ includes the number of buyers and sellers in each local market, denoted by $|V^B_n(t)\bigcup V^S_n(t)|, \forall n \in \mathcal{N}$, respectively, the price of the last transaction is denoted by $\bar{p}(t-1)$, while the transmission rate with RSUs and relay vehicles can be represented by $R_{v}^{d}(t)$ and $R^{c}_{v,\bar{v}}(t)$, which can be represented as
\begin{equation}
\begin{aligned}
    O_v(t) = \{|V^B_1(t)\cup V^S_1(t)|, &\ldots, |V^B_N(t)\cup V^S_N(t)|, \\&\bar{p}(t-1), R_{v}^{d}(t), R^{c}_{v,\bar{v}}(t)\}
\end{aligned}
\end{equation}
\subsubsection{Action} The action of vehicle $v$ at time slot $t$ is represented by the strategy $A_v(t)=\{0,1\}$ {\color{black}in} which market the buyer enters. When $A_v(t)=0$, the vehicle $v$ participates in the urgent submarket. When $A_v(t)=1$, the vehicle $v$ enters the buyer pool and participates in the mundane submarket.  
\subsubsection{Reward} The reward function consists of social welfare, budget and transmission latency at the current time slot, which can be calculated as
\begin{equation}
    R_v(S(t), A_v(t)) = SW(t) - \alpha \beta(t)^2 - L(t),
\end{equation}
where $\alpha$ is the coefficient to scale the budget cost.
Specifically, social welfare is represented by the total utility of all buyers and sellers in the system, which is a function of the prices paid and received for each transaction. The budget is represented by the difference between the total payment received from all winning bidders and the total price paid to all winning sellers. Transmission latency is represented by the delay caused by the transmission of content over the IoV.
\subsubsection{Value Function} Given policy $\pi_v$ of vehicle $v$, value function $V_{\pi_v}(S(t))$ of state $S(t)$, the expected return when starting in $S$ and following $\pi_v$, can be formulated by
	\begin{equation}
	V_{\pi_v}(S) := \mathbb{E}_\pi\left[\sum_{t=0}^{T}\gamma^k R_v(S(t), A_v(t))|S^0=S\right],
	\end{equation}
	where $\mathbb{E}_\pi(\cdot)$ denotes the expected value of a random variable given that the {\color{black}learning agent} follows policy $\pi$ and $\gamma\in[0,1]$ is the discount factor for rewards used to reduce weights as the time step increases.

The POMDP framework provides a mathematical model that can be used to optimize the buyer's decision-making process. By observing the outcome of the system and taking appropriate action, the agent can maximize its expected reward over the long term. This framework can be used to design learning agents that can participate in the P2P data trading system while maximizing their own utility and contributing to the overall welfare of the system. To maximize the value function, Multi-agent Proximal Policy Optimization (PPO), a type of reinforcement learning algorithm, is used to train multiple agents to perform tasks in environments where feedback is provided in the form of a reward signal. In this context, each agent is trained using the POMDP framework. The training process involves observing the current state of the system and taking appropriate action to maximize the expected reward over the long term. Let $\theta_v$ be the parameters in the policy network of vehicle $v$ and $\phi_v$ be the parameters in the value network. In MAPPO, each agent maintains its policy $\pi_{v}(\theta_v)$, which is updated using a centralized critic $V(s;\phi_v)$ that estimates the value of the global state $S$. The critic and policy network is trained using the clip loss $L^\emph{CLIP} (\theta_v, \phi_v)= L^\emph{P}(\theta_v) +  L^\emph{V}(\phi_v)$, which consists of the loss of policy networks and value networks~\cite{schulman2017proximal} in the global state.

{\color{black}Following the properties of the second-price auction and Macfee's double auction~\cite{niyato2020auction}, we demonstrate that the proposed decentralized hierarchical auction is IR and truthfulness in a decentralized market. Firstly, we need to prove that the proposed mechanism is IR and truthful in each local market. 

\begin{lemma}
Under the determined market entry strategies $A_v, \forall v\in \mathcal{V}^B(t)$, the mechanism is IR and truthfulness in each local market.
\end{lemma}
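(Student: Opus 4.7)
The plan is to exploit the fact that, once the entry strategies $A_v$ are fixed, the local market at each RSU decomposes into two essentially independent auctions: a single-sided urgent submarket run as a sealed-bid second-price auction on the seller pool, and a double-sided mundane submarket run according to McAfee's mechanism on the residual buyers and sellers. Because both IR and truthfulness are properties of a bidder's single-round best response, it suffices to verify them in each submarket separately, and then observe that no bidder has an incentive to deviate across the two, since $A_v$ is assumed fixed (any strategic choice of submarket is handled later by the MADRL policy, not by the bid value itself).

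First I would handle the urgent submarket. Here I plan to show directly from the allocation and pricing rules that, for the winning buyer $v$, the payment $p_v^b(t) = \max\{\mathcal{D}^S_{n,-v'}(t)\}$ is independent of $v$'s own bid, which immediately yields the Vickrey-style dominant-strategy truthfulness by the standard two-case argument (overbidding can only cause $v$ to win at a price exceeding $u_v$; underbidding can only cause $v$ to lose trades that would have yielded non-negative utility). For the winning seller $v'$, the revenue $p_{v'}^s(t) = \max\{\mathcal{D}^S_n(t)\}$ depends only on the maximum of the other sellers' asks relative to $v'$'s own, so the analogous two-case argument gives truthfulness. IR then follows because a truthful winning buyer pays at most $u_v$ and a truthful winning seller receives at least $u_{v'}$, while losing bidders obtain a utility of zero.

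Next I would turn to the mundane submarket and invoke the structural properties of McAfee's mechanism. The argument would proceed by identifying the break-even index $K$, the candidate clearing price $p(t)=(b_{K+1}+s_{K+1})/2$, and then treating the two regimes separately. In the "all-$K$-trade" regime I would use the inequalities $s_K \le s_{K+1} \le p(t) \le b_{K+1} \le b_K$ to derive IR for every trading agent, and truthfulness from the fact that $p(t)$ is determined entirely by the $(K+1)$-th bids on each side, which no trading agent can manipulate without losing its trade. In the trade-reduction regime I would instead use the boundary bids $b_K$ and $s_K$ as the prices faced by trading buyers and sellers; IR follows from the ordering of the sorted pools, and truthfulness follows because, once again, each trading agent's payment is pinned to bids other than its own. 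For the non-trading agents, utility is zero, trivially satisfying IR; and truthfulness reduces to checking that no deviation can profitably push an agent across the break-even index, which is the classical McAfee argument.

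The main obstacle I anticipate is not any individual submarket (both are standard) but rather the coupling caused by the fact that the mundane seller pool inherits the untraded sellers from the urgent round. I would handle this by arguing that, for a passive seller whose bid $d^s_{v'}(t)$ is fixed, being routed from the urgent to the mundane pool is a deterministic function of the other sellers' bids and of the arriving urgent buyer, so the seller's expected utility under truthful bidding weakly dominates that under any misreport in both the event it is matched urgently and the event it falls through to the mundane clearing. Combining this cross-submarket domination with the within-submarket truthfulness established above gives truthfulness of the combined local mechanism, and IR transfers without modification since each submarket's IR guarantee is unconditional on the bidder's fate in the other submarket.
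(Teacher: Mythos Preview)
Your approach is essentially the same as the paper's: both rest on the fact that, once the entry strategies $A_v$ are fixed, each local market reduces to a second-price auction (urgent side) and McAfee's double auction (mundane side), whose IR and dominant-strategy truthfulness are standard. The paper's own proof of this lemma is in fact a single sentence invoking exactly those known properties, so your detailed Vickrey two-case argument and McAfee break-even analysis simply unpack what the paper takes as given.

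The one place you go beyond the paper is your final paragraph on cross-submarket coupling for passive sellers (a seller not matched in the urgent round falls through to the mundane pool, so its bid could in principle affect which submarket it trades in). The paper does not address this at all in the lemma; it treats the two submarkets as cleanly separable once $A_v$ is fixed and defers any cross-market reasoning to the theorem that follows. Your weak-dominance-in-both-events argument is a reasonable way to close that gap, though it is extra relative to what the lemma as stated in the paper is meant to certify.
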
}

{\color{black}This lemma can be obtained straightforwardly based on the properties of the second-price auction and the dominant strategy double auction~\cite{mcafee1992dominant}.} Then, as the mechanism can maintain IR and truthfulness in local markets, we demonstrate that the properties in local markets can also be extended to the global decentralized market.

\begin{theorem}
In a global decentralized market consisting of multiple local markets, the proposed mechanism is individually rational and truthful.
\end{theorem}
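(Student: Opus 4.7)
The plan is to lift the local-market properties established in Lemma 1 to the global decentralized market by exploiting the disjointness of the local submarkets induced by constraint \eqref{con1}. The key structural observation is that at any time slot $t$, the indicator $g_{v,n}(t)$ partitions the set of active vehicles $\mathcal{V}^B(t)\cup\mathcal{V}^S(t)$ into $|\mathcal{N}|$ disjoint subsets, one per RSU. Each RSU $n$ then independently runs its own hierarchical submarket (urgent second-price plus mundane McAfee) on the vehicles assigned to it, so the global allocation matrices $X(t), Y(t)$ and the global pricing vectors $\mathbf{p}^b(t), \mathbf{p}^s(t)$ decompose as a disjoint concatenation over $n\in\mathcal{N}$. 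Consequently, any global utility or welfare quantity for vehicle $v$ depends only on the bids submitted within its own local market.

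For individual rationality, I would argue as follows. Fix any vehicle $v$ and let $n^\star$ be the unique RSU with $g_{v,n^\star}(t)=1$. By Lemma 1, the local mechanism at RSU $n^\star$ is IR, so $p^b_v(t)\le u_v(t)$ if $v\in\mathcal{V}^B(t)$ and $p^s_v(t)\ge u_v(t)$ if $v\in\mathcal{V}^S(t)$, yielding $\mu_v(t)\ge 0$ globally. Since no other RSU's auction can assign a charge or revenue to $v$ (because $g_{v,n}(t)=0$ for $n\neq n^\star$), the global utility of $v$ equals its local utility in market $n^\star$, and IR is preserved.

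For truthfulness, I would fix a vehicle $v$ and consider any deviation in its reported bid or ask. Since the allocation and pricing rules at RSU $n'\neq n^\star$ depend only on bids submitted there, $v$'s report cannot alter the outcome in any market other than $n^\star$. Hence the deviation only affects the component of $v$'s utility generated in market $n^\star$, and by Lemma 1 the local mechanism at $n^\star$ is truthful, so truthful reporting is a (weakly) dominant strategy globally as well. Combining the two arguments establishes the theorem.

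The main obstacle I anticipate is being precise about the status of the market-entry action $A_v(t)$ produced by the MADRL policy: Lemma 1 is conditioned on fixed entry strategies, so one must clarify that truthfulness refers to the reporting of $b_v(t)$ and $a_v(t)$ \emph{given} a choice of submarket, rather than to the strategic selection of which submarket to join. I would handle this by stating truthfulness as a conditional property (for any realization of $A_v(t)$) and noting that the entry decision itself is treated as part of the outer learning problem rather than the auction's strategy space, so it does not break the single-shot bidding-truthfulness argument in either submarket.
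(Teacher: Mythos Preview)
Your proposal is correct. It differs from the paper's proof mainly in \emph{structure} rather than in substance: you invoke Lemma~1 as a black box and then lift IR and truthfulness to the global market via the disjoint partition induced by constraint~\eqref{con1}, whereas the paper's proof essentially re-derives the content of Lemma~1 inside the theorem---it argues IR and dominant-strategy truthfulness directly for the second-price urgent submarket and for McAfee's double auction in the mundane submarket, under the assumption that the entry action $A_v(t)$ is fixed. Your decomposition argument is cleaner and avoids that redundancy; it also makes explicit the step the paper leaves implicit, namely that a vehicle's bid cannot influence any market other than the unique $n^\star$ with $g_{v,n^\star}(t)=1$, so the global utility collapses to the local one. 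Your treatment of the entry strategy $A_v(t)$ as exogenous to the bidding game (and hence outside the scope of the truthfulness claim) matches exactly what the paper assumes at the start of its proof.
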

\begin{proof}
{\color{black}To prove individual rationality and truthfulness, we need to show the following properties hold for both the urgent and the mundane submarkets. As the market entry strategies are determined, the buyer $v$ with $A_v(t) = 0$ entering urgent submarkets always enters the urgent submarket in each local market, and with $A_v(t) = 1$ entering mundane submarkets enter the mundane submarket in each local market.}

{\color{black}First, we demonstrate that the proposed auction mechanism is IR for all participants. For the second-price auction in the urgent submarkets, the highest bidder pays the second-highest bid, which is less than or equal to their valuation. Thus, the winning buyers' utilities are non-negative, i.e., 
\begin{equation}
    \mu_v(t)(b_v(t)) = u_v(t) - p_v^b(t) \geq 0, \text{for} \sum_{v'\in\mathcal{V}^S(t)} x_{v,v'}=1
\end{equation}
for $\forall v\in \mathcal{V}^B(t).$ For the losers, their utility is zero since they do not need to pay, i.e.,
\begin{equation}
    \mu_v(t)(b_v(t)) = u_v(t) - p_v^b(t) = 0, \text{for} \sum_{v'\in\mathcal{V}^S(t)} x_{v,v'}=0
\end{equation}
for $\forall v\in \mathcal{V}^B(t).$ Therefore, the second-price auction is individually rational. For the double auction in the mundane submarkets, buyers pay a price less than or equal to their bid, and sellers receive a price greater than or equal to their ask. Since bidders only participate when their valuations are met or exceeded, their utility is non-negative. Thus, the double auction is IR.}

{\color{black}For the second-price auction in the urgent submarkets, truth-telling is a dominant strategy. If a bidder submits a bid $d_v^b(t)'$ higher than their true valuation $u_v(t)$, they risk winning the auction and paying more than their valuation, which results in negative utility $\mu_v(t)(d_v^b(t)') = u_v(t) - p_v^b(t) < 0$. If they submit a bid lower than their true valuation, they risk losing the auction even if they could have won, i.e., $\mu_v(t)(d_v^b(t)') = 0$, which may result in a lower utility. For the double auction in the mundane submarkets, buyers submitting a bid $d_v^b(t)'$ higher than their true valuation risk paying more than their valuation, i.e., $\mu_v(t)(d_v^b(t)') = \mu_v(t)(b_v(t)), \forall v\in\mathcal{V}^B(t)$ whereas submitting a bid lower than their true valuation risks missing out on a potentially profitable trade, i.e., $\mu_v(t)(d_v^b(t)')=0, \forall v\in\mathcal{V}^B(t)$. Similarly, sellers submitting an ask $d_v^s(t)'$ lower than their true valuation receive the same revenue as their valuation, i.e., $\mu_v(t)(d_v^s(t)') = \mu_v(t)(d_v^s(t)), \forall v\in\mathcal{V}^S(t)$, and submitting an ask $\bar{d}_v^s(t)$ higher than their true valuation risk missing out on a potentially profitable trade, i.e, $\mu_v(t)(\bar{d}_v^s(t))=0$. Therefore, truth-telling is the best strategy in second-price auctions and double auctions. Since both submarkets are truthful, the hierarchical auction mechanism is truthful.}
\end{proof}

Besides the IR and truthfulness, in the next section, we experimentally show that the proposed mechanism can improve market efficiency and reduce budget imbalance while reducing transmission latency in the IoV data-sharing market.

\section{Experimental Results}\label{sec: experiment}

In our experiments, we consider a {\color{black}1km$\times$1km} simulated vehicular network environment consisting of 4 RSUs, each of which covers an area of 500 m and a set of vehicles traveling among these RSUs with an average speed of 90 Km per hour. {\color{black} The simulator is established specifically to meet the evaluation methodology for the urban case outlined in Annex A of 3GPP TR 36.885~\cite{liang2019spectrum}.} The requests of vehicles are sampled from $[0, 1]$. When the required quality of vehicle $v$ is $1$, the vehicle is the buyer in the market otherwise it is the seller. The number of chucks required by each vehicle $v$ is randomly sampled from $[1, 10]$, and its value $u_v$ is calculated as $\log(1+c_v/10)$. The transmit power $p_v$ of selling vehicle $v$ is sampled from $U[0,10]$ mW and its value $u_v$ in proportion to the transmit power. The learning rate of DRL agents is set to $0.001$ and the discounting factor is set to $0.95$. We test the performance of the proposed MADRL-based mechanism in the systems with $20, 30, 40, 50, 60, 70,$ and $80$ vehicles. The coefficient of the value function is set to $0.5$, the entropy coefficient is set to $0.02$, and the clip factor is set to $0.2$. 

\subsection{Convergence Analysis}
\begin{figure}[t]
    \centering
    \includegraphics[width=0.8\linewidth]{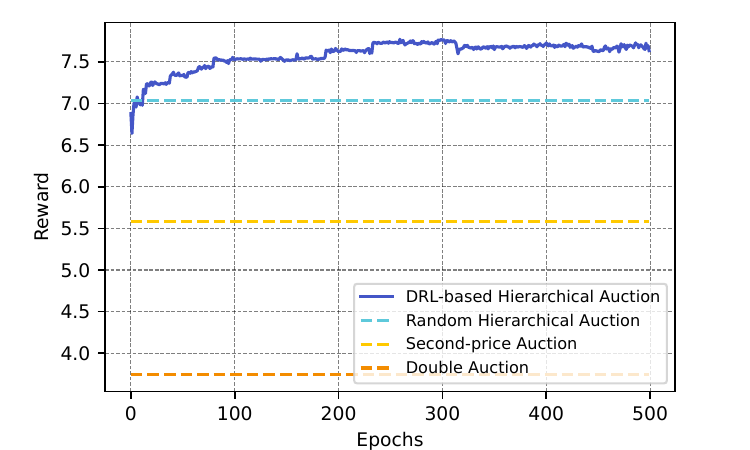}
    \caption{Training reward {\color{black}versus} training epochs.}
    \label{fig:convergence}
    \vspace{-5mm}
\end{figure}

\begin{figure}[t]
    \centering
    \includegraphics[width=0.8\linewidth]{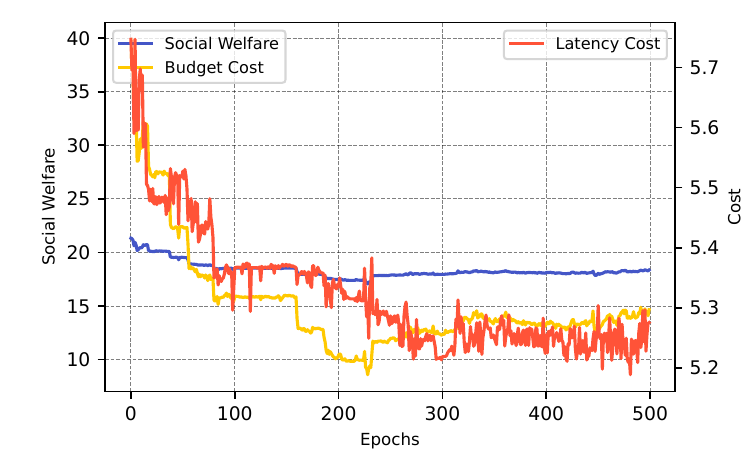}
    \caption{Detailed cost structure (social welfare, budget cost, and transmission delay) {\color{black}versus} training epoch.}
    \label{fig:convergence_eq}
    \vspace{-5mm}
\end{figure}
We first conduct experiments to {\color{black}show that} the MADRL-based mechanism can converge to satisfactory performance in the decentralized continuous data-sharing market. The results in Fig. \ref{fig:convergence} demonstrate that the algorithm converges to a satisfactory reward after approximately 300 epochs. The proposed algorithm outperforms random algorithms after only 10 epochs of training, demonstrating its efficiency and effectiveness. One of the major advantages of using DRL training is that rewards in the market can be significantly improved by up to 10\% compared to random strategies. The reason is that the proposed MADRL algorithm can learn and optimize its strategies based on past experiences and market conditions. More detailed analysis is provided in Fig.~\ref{fig:convergence_eq} for the performance of the proposed algorithm, social welfare, and budget costs in the market decreasing continuously during training. This is because there are fewer two-price auctions in the emergent market, leading to a decrease in social welfare and budget cost. Additionally, the latency cost in the market also decreases with the increase of training epochs, although it has no direct impact on market efficiency. Overall, the proposed MADRL-based mechanism offers significant benefits to vehicular network {\color{black}providers}. Not only can it reduce the latency cost of content transmission, but it can also maintain market efficiency, making it a valuable tool for market selection and optimization.

\subsection{Performance Comparison}
\begin{figure}[t]
    \centering
    \includegraphics[width=0.8\linewidth]{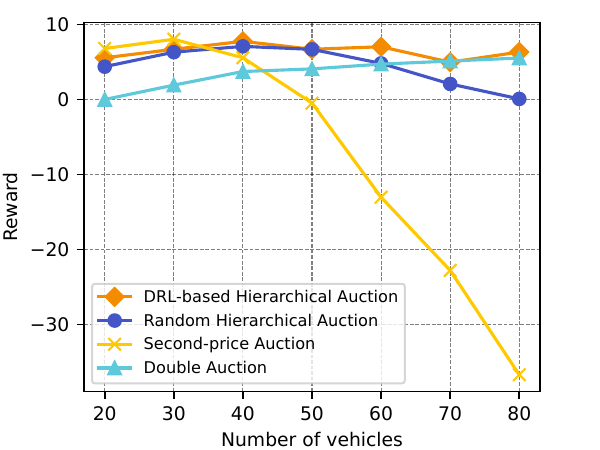}
    \caption{Reward {\color{black}versus} number of vehicles.}
    \label{fig:reward}
    \vspace{-5mm}
\end{figure}

\begin{figure}[t]
    \centering
    \includegraphics[width=0.8\linewidth]{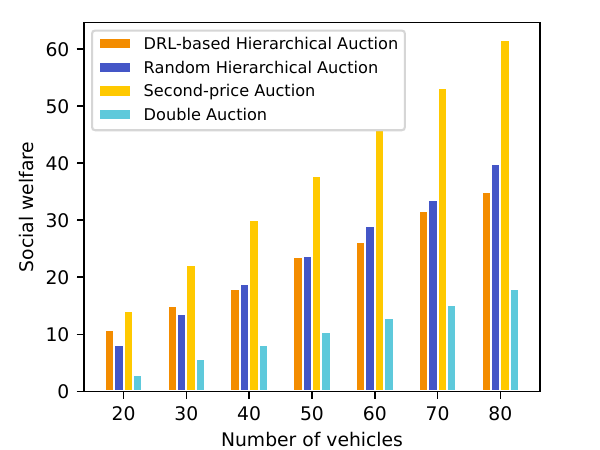}
    \caption{Social welfare {\color{black}versus} number of vehicles.}
    \label{fig:social}
    \vspace{-5mm}
\end{figure}

\begin{figure}[t]
    \centering
    \includegraphics[width=0.8\linewidth]{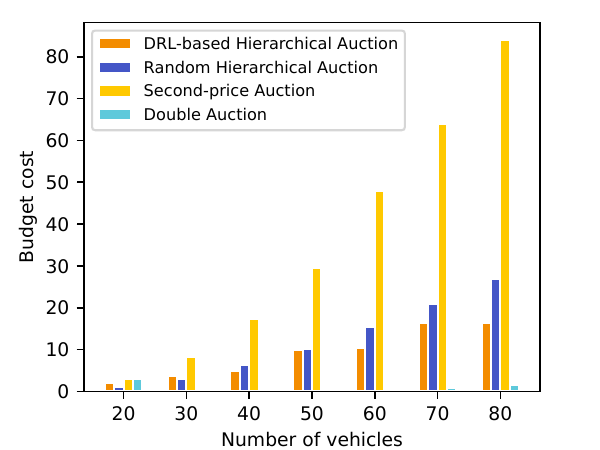}
    \caption{Budget cost {\color{black}versus} number of vehicles.}
    \label{fig:budget}
    \vspace{-5mm}
\end{figure}

\begin{figure}[t]
    \centering
    \includegraphics[width=0.8\linewidth]{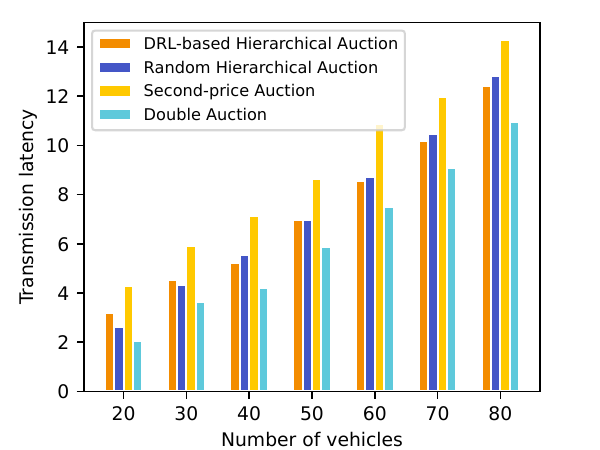}
    \caption{Transmission latency {\color{black}(ms)} versus number of vehicles.}
    \label{fig:latency}
    \vspace{-5mm}
\end{figure}
The rewards achieved by the proposed mechanism and other baselines are shown in Fig. \ref{fig:reward}. The performance of the proposed MADRL-based mechanism remains stable as the number of vehicles increases. When the number of vehicles is small, the second-price auction can yield a high reward, but its performance deteriorates rapidly as the number of vehicles grows. The random mechanism exhibits a similar trend to the second-price auction, as all of these mechanisms cannot fully utilize the information in different local markets. On the other hand, the performance of the double auction is {\color{black} poor} when the number of vehicles is small, {\color{black} as it can merely achieve limited social welfare, which is the sum of utilities of all entities in the system}. However, as the local market size grows, the performance of the double auction improves, as the price information is sufficient to balance the supply and demand. Nonetheless, the proposed MADRL-based mechanism can always achieve stable and satisfactory performance across different market sizes.

\textcolor{black}{Social welfare refers to the sum of utilities of all entities within a system, and it holds significant importance as it captures the collective utilities of all stakeholders engaged, rather than focusing solely on individuals.} From Fig. \ref{fig:social}, we can observe that the second-price auction consistently achieves the highest social welfare among all the mechanisms considered in our study. This is because the second-price auction encourages truthful bidding, leading to an efficient allocation of resources. On the other hand, the social welfare of the double auction varies based on the number of vehicles, which can be attributed to the fact that the double auction is more complex and involves more interactions among the bidders.
\textcolor{black}{In addition, we can observe that the double auction falls notably short in generating social welfare in comparison to alternative schemes. This observation indicates that, when provided with the same amount of resources, double auction produces lower overall utilities, thereby translating into inefficiencies in resource utilization.} 
Interestingly, we find that the DRL-based mechanism and the random mechanism achieve similar social welfare, with their difference increasing as the number of vehicles increases. This suggests that the DRL-based mechanism may not be the best choice in settings where the computational cost is high.

Moving on to Fig. \ref{fig:budget}, we observe that the increase in the number of vehicles has a significant impact on the budget cost of the second-price auction. This is because the second-price auction involves more operations, which in turn incurs higher communication and computation costs. However, the budget cost of the double auction is almost unaffected and always zero, as the double auction does not require any communication or computation. The random mechanism has a smaller budget cost than the DRL-based mechanism, with the difference increasing as the number of vehicles increases. This is because the random mechanism is the simplest mechanism among all the mechanisms considered in our study, and it does not require any computation or communication.

Finally, in Fig. \ref{fig:latency}, we compare the transmission delay of each mechanism. The double auction incurs the least transmission delay as it facilitates fewer transactions, regardless of the number of vehicles. However, it is important to note that the double auction may not achieve the highest social welfare although it can achieve the lowest budget cost.
Compared with the random mechanism, our proposed DRL-based mechanism can effectively reduce transmission latency, which is achieved by leveraging the power of deep reinforcement learning to learn an optimal bidding function that minimizes the transmission delay while ensuring a desirable level of social welfare.


\section{Conclusion \label{sec: conclusion}}
Secure and efficient data sharing in IoV systems is an essential component that contributes to the proliferation of IoV ecosystems. In this paper, we enhanced the trading efficiency and minimized transmission latency for efficient data sharing in the IoV by designing a decentralized market with continuous double auctions. {\color{black}The proposed incentive mechanism to encourage user participation in the market is based on MADRL, for which theoretical analysis and experimental results show that it outperforms both second-price auctions and double auctions by at least 10\% while reducing transmission latency by 20\%.} In addition, we propose a time-sensitive KP-ABE encryption mechanism to couple with NDN to protect data in IoV, which adds an additional layer of security to our proposed solution. The proposed KP-ABE scheme enhances the security of data sharing by limiting access to data based on validity time and reducing the risk of unauthorized access, while NDN improves resource utilization in the network.

\bibliographystyle{bibstyle}
\bibliography{references}

\begin{thebibliography}{10}

\bibitem{nss}
J.~Fan, L.~K. Shar, J.~Guo, W.~Yang, N.~Dusit, and K.-Y. Lam, Differentiated
  Security Architecture for Secure and Efficient Infotainment Data
  Communication in IoV Networks {\em 16th International Conference on Network
  and System Security}, pp.~283--304, 2022.

\bibitem{antsurvey}
J.~Fan, W.~Yang, Z.~Liu, J.~Kang, D.~Niyato, K.-Y. Lam, and H.~Du,
  Understanding Security in Smart City Domains From the ANT-centric Perspective
  {\em IEEE Internet of Things Journal}, pp.~1--1, 2023.

\bibitem{surveyIoV20}
B.~Ji, X.~Zhang, S.~Mumtaz, C.~Han, C.~Li, H.~Wen, and D.~Wang, Survey on the
  Internet of Vehicles: Network Architectures and Applications {\em IEEE
  Communications Standards Magazine}, vol.~4, no.~1, pp.~34--41, 2020.

\bibitem{Karim2022}
S.~M. Karim, A.~Habbal, S.~A. Chaudhry, and A.~Irshad, Architecture, Protocols,
  and Security in IoV: Taxonomy, Analysis, Challenges, and Solutions {\em
  Security and Communication Networks}, vol.~2022, p.~1131479, Oct 2022.

\bibitem{9799964}
M.~Dewalegama, A.~de~Zoysa, L.~Kodikara, D.~Dissanayake, T.~A. Kuruppu, and
  S.~Rupasinghe, Deep Learning-Based Smart Infotainment System for Taxi
  Vehicles in {\em 2022 International Congress on Human-Computer Interaction,
  Optimization and Robotic Applications (HORA)}, pp.~1--6, 2022.

\bibitem{tesla}
Tesla, ``Tesla: Infotainment upgrade.''
  \url{https://www.tesla.com/support/infotainment/}, 2023.
\newblock Accessed: 2023-09-30.

\bibitem{Kannadhasan_2021}
A.~Kannadhasan, Self diagnostic cars: Using Infotainment Electronic Control
  Unit {\em SAE Technical Paper Series}, 2021.

\bibitem{fi11040089}
F.~Salahdine and N.~Kaabouch, Social Engineering Attacks: A Survey {\em Future
  Internet}, vol.~11, no.~4, 2019.

\bibitem{Al-Sabaawi2021}
A.~Al-Sabaawi, K.~Al-Dulaimi, E.~Foo, and M.~Alazab, {\em Addressing Malware
  Attacks on Connected and Autonomous Vehicles: Recent Techniques and
  Challenges}, pp.~97--119.
\newblock Cham: Springer International Publishing, 2021.

\bibitem{9359135}
Z.~Yu, J.~Hu, G.~Min, H.~Xu, and J.~Mills, Proactive Content Caching for
  Internet-of-Vehicles based on Peer-to-Peer Federated Learning in {\em 2020
  IEEE 26th International Conference on Parallel and Distributed Systems
  (ICPADS)}, pp.~601--608, 2020.

\bibitem{Yogarayan_Razak_Azman_Abdullah_2021}
S.~Yogarayan, S.~F. Razak, A.~Azman, and M.~F. Abdullah, A mini review of
  peer-to-peer (P2P) for vehicular communication {\em Indonesian Journal of
  Electrical Engineering and Informatics (IJEEI)}, vol.~9, no.~1, 2021.

\bibitem{deshmukh2002truthful}
K.~Deshmukh, A.~V. Goldberg, J.~D. Hartline, and A.~R. Karlin, Truthful and
  competitive double auctions in {\em ESA}, vol.~2, pp.~127--130, 2002.

\bibitem{anufriev2013efficiency}
M.~Anufriev, J.~Arifovic, J.~Ledyard, and V.~Panchenko, Efficiency of
  continuous double auctions under individual evolutionary learning with full
  or limited information {\em Journal of Evolutionary Economics}, vol.~23,
  pp.~539--573, 2013.

\bibitem{hui2022collaboration}
Y.~Hui, X.~Ma, Z.~Su, N.~Cheng, Z.~Yin, T.~H. Luan, and Y.~Chen, Collaboration
  as a Service: Digital-Twin-Enabled Collaborative and Distributed Autonomous
  Driving {\em IEEE Internet of Things Journal}, vol.~9, no.~19,
  pp.~18607--18619, 2022.

\bibitem{reliableandscalable}
Y.~Ni, L.~Cai, J.~He, A.~Vinel, Y.~Li, H.~Mosavat-Jahromi, and J.~Pan, Toward
  Reliable and Scalable Internet of Vehicles: Performance Analysis and Resource
  Management {\em Proceedings of the IEEE}, vol.~108, no.~2, pp.~324--340,
  2020.

\bibitem{routingalgo}
L.-L. Wang, J.-S. Gui, X.-H. Deng, F.~Zeng, and Z.-F. Kuang, Routing Algorithm
  Based on Vehicle Position Analysis for Internet of Vehicles {\em IEEE
  Internet of Things Journal}, vol.~7, no.~12, pp.~11701--11712, 2020.

\bibitem{hv2x}
S.~Heo, W.~Yoo, H.~Jang, and J.-M. Chung, H-V2X Mode 4 Adaptive Semipersistent
  Scheduling Control for Cooperative Internet of Vehicles {\em IEEE Internet of
  Things Journal}, vol.~8, no.~13, pp.~10678--10692, 2021.

\bibitem{s22041387}
S.~S. Musa, M.~Zennaro, M.~Libsie, and E.~Pietrosemoli, Mobility-Aware
  Proactive Edge Caching Optimization Scheme in Information-Centric IoV
  Networks {\em Sensors}, vol.~22, no.~4, 2022.

\bibitem{zhang2014named}
L.~Zhang, A.~Afanasyev, J.~Burke, V.~Jacobson, K.~Claffy, P.~Crowley,
  C.~Papadopoulos, L.~Wang, and B.~Zhang, Named data networking {\em ACM
  SIGCOMM Computer Communication Review}, vol.~44, no.~3, pp.~66--73, 2014.

\bibitem{9042284}
N.~Yang, K.~Chen, and Y.~Liu, Towards Efficient NDN Framework for Connected
  Vehicle Applications {\em IEEE Access}, vol.~8, pp.~60850--60866, 2020.

\bibitem{Named_Data_Networking_(NDN)_2021}
N.~D.~N. Project, ``Named data networking.''
  \url{https://named-data.net/project/archoverview/}, Jul 2021.
\newblock Accessed: 2023-09-30.

\bibitem{blockchainp2p}
C.~Chen, J.~Wu, H.~Lin, W.~Chen, and Z.~Zheng, A Secure and Efficient
  Blockchain-Based Data Trading Approach for Internet of Vehicles {\em IEEE
  Transactions on Vehicular Technology}, vol.~68, no.~9, pp.~9110--9121, 2019.

\bibitem{9211724}
M.~B. Mollah, J.~Zhao, D.~Niyato, Y.~L. Guan, C.~Yuen, S.~Sun, K.-Y. Lam, and
  L.~H. Koh, Blockchain for the Internet of Vehicles Towards Intelligent
  Transportation Systems: A Survey {\em IEEE Internet of Things Journal},
  vol.~8, no.~6, pp.~4157--4185, 2021.

\bibitem{9748067}
B.~Ji, Z.~Chen, S.~Mumtaz, C.~Han, C.~Li, H.~Wen, and D.~Wang, A Vision of IoV
  in 5G HetNets: Architecture, Key Technologies, Applications, Challenges, and
  Trends {\em IEEE Network}, vol.~36, no.~2, pp.~153--161, 2022.

\bibitem{9130160}
A.~Chattopadhyay, K.-Y. Lam, and Y.~Tavva, Autonomous Vehicle: Security by
  Design {\em IEEE Transactions on Intelligent Transportation Systems},
  vol.~22, no.~11, pp.~7015--7029, 2021.

\bibitem{8076542}
M.~Yu, R.~Li, Y.~Liu, and Y.~Li, A caching strategy based on content popularity
  and router level for NDN in {\em 2017 7th IEEE International Conference on
  Electronics Information and Emergency Communication (ICEIEC)}, pp.~195--198,
  2017.

\bibitem{khelifi2019named}
H.~Khelifi, S.~Luo, B.~Nour, H.~Moungla, Y.~Faheem, R.~Hussain, and
  A.~Ksentini, Named data networking in vehicular ad hoc networks:
  State-of-the-art and challenges {\em IEEE Communications Surveys \&
  Tutorials}, vol.~22, no.~1, pp.~320--351, 2019.

\bibitem{ELKHALIL2021101885}
A.~Elkhalil, J.~zhang, R.~Elhabob, and N.~Eltayieb, An efficient signcryption
  of heterogeneous systems for Internet of Vehicles {\em Journal of Systems
  Architecture}, vol.~113, p.~101885, 2021.

\bibitem{9406119}
K.-Y. Lam, S.~Mitra, F.~Gondesen, and X.~Yi, ANT-Centric IoT Security Reference
  Architecture—Security-by-Design for Satellite-Enabled Smart Cities {\em
  IEEE Internet of Things Journal}, vol.~9, no.~8, pp.~5895--5908, 2022.

\bibitem{1238221}
D.~Kempe, A.~Dobra, and J.~Gehrke, Gossip-based computation of aggregate
  information in {\em 44th Annual IEEE Symposium on Foundations of Computer
  Science, 2003. Proceedings.}, pp.~482--491, 2003.

\bibitem{8637812}
J.~Lai, X.~Lu, F.~Wang, P.~Dehghanian, and R.~Tang, Broadcast Gossip Algorithms
  for Distributed Peer-to-Peer Control in AC Microgrids {\em IEEE Transactions
  on Industry Applications}, vol.~55, no.~3, pp.~2241--2251, 2019.

\bibitem{9539193}
N.~Loizou and P.~Richtárik, Revisiting Randomized Gossip Algorithms: General
  Framework, Convergence Rates and Novel Block and Accelerated Protocols {\em
  IEEE Transactions on Information Theory}, vol.~67, no.~12, pp.~8300--8324,
  2021.

\bibitem{ali2019authentication}
I.~Ali, A.~Hassan, and F.~Li, Authentication and privacy schemes for vehicular
  ad hoc networks (VANETs): A survey {\em Vehicular Communications}, vol.~16,
  pp.~45--61, 2019.

\bibitem{9552176}
J.~Li, Y.~Li, C.~Cao, and K.-Y. Lam, Conditional Anonymous Authentication With
  Abuse-Resistant Tracing and Distributed Trust for Internet of Vehicles {\em
  IEEE Internet of Things Journal}, vol.~9, no.~11, pp.~8749--8762, 2022.

\bibitem{liu2018time}
J.~K. Liu, T.~H. Yuen, P.~Zhang, and K.~Liang, Time-Based Direct Revocable
  Ciphertext-Policy Attribute-Based Encryption with Short Revocation List in
  {\em Applied Cryptography and Network Security} (B.~Preneel and
  F.~Vercauteren, eds.), (Cham), pp.~516--534, Springer International
  Publishing, 2018.

\bibitem{liu2020new}
Z.~Liu, F.~Wang, K.~Chen, and F.~Tang, A New User Revocable Ciphertext-Policy
  Attribute-Based Encryption with Ciphertext Update {\em Security and
  Communication Networks}, vol.~2020, 2020.

\bibitem{liu2022welfare}
Z.~Liu, M.~Lu, Z.~Wang, M.~Jordan, and Z.~Yang, Welfare maximization in
  competitive equilibrium: Reinforcement learning for markov exchange economy
  in {\em International Conference on Machine Learning}, pp.~13870--13911,
  PMLR, 2022.

\bibitem{mcafee1992dominant}
R.~P. McAfee, A dominant strategy double auction {\em Journal of economic
  Theory}, vol.~56, no.~2, pp.~434--450, 1992.

\bibitem{schulman2017proximal}
J.~Schulman, F.~Wolski, P.~Dhariwal, A.~Radford, and O.~Klimov, Proximal policy
  optimization algorithms {\em arXiv preprint arXiv:1707.06347}, 2017.

\bibitem{niyato2020auction}
D.~Niyato, N.~C. Luong, P.~Wang, and Z.~Han, Auction theory for computer
  networks 2020.

\bibitem{liang2019spectrum}
L.~Liang, H.~Ye, and G.~Y. Li, Spectrum sharing in vehicular networks based on
  multi-agent reinforcement learning {\em IEEE Journal on Selected Areas in
  Communications}, vol.~37, no.~10, pp.~2282--2292, 2019.

\end{thebibliography}

\section*{Acknowledgement}
This research is supported by the National Research Foundation, Singapore under its Strategic Capability Research Centres Funding Initiative. Any opinions, findings, and conclusions or recommendations expressed in this material are those of the author(s) and do not reflect the views of National Research Foundation, Singapore; Jiani Fan's research is partly supported by Alibaba Group through Alibaba Innovative Research (AIR) Program and Alibaba-NTU Singapore Joint Research Institute (JRI), Nanyang Technological University, Singapore.

\vspace{-1cm}

\section*{Biography}
\vskip -1\baselineskip plus -1 fil
\begin{IEEEbiography}[{\includegraphics[width=1in,height=1.25in,clip,keepaspectratio]{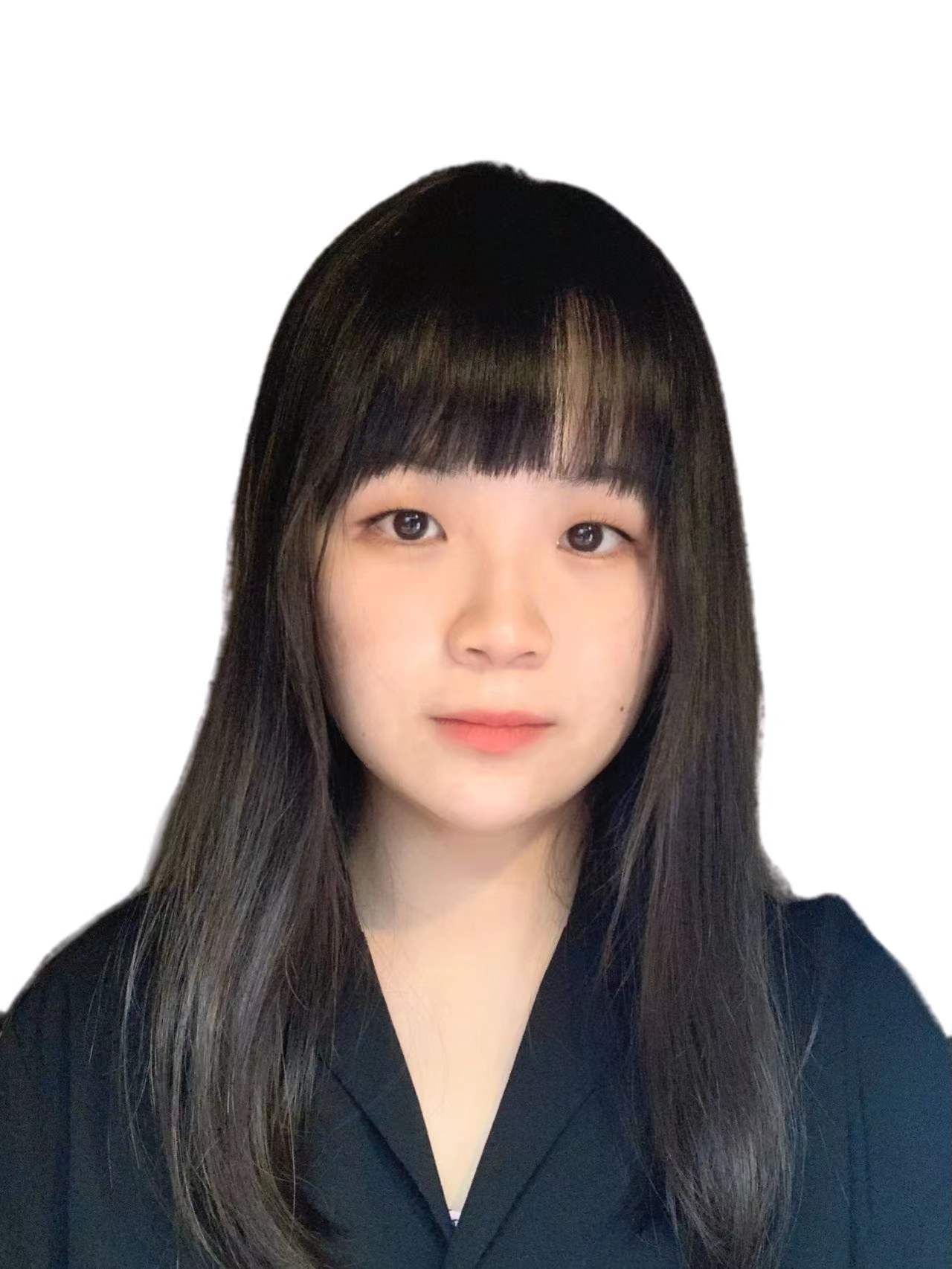}}]{Jiani Fan} received her B.S. from the School of Computing and Information Systems at Singapore Management University, Singapore. She is currently pursuing a Ph.D. degree at the School of Computer Science and Engineering, Nanyang Technological University, Singapore. Her research interests include IoT security, cybersecurity, and the Internet of Vehicles, with a focus on AI-powered security solutions.
\end{IEEEbiography}
\vspace{-0.5cm}

\vskip 0pt plus -1 fil
\begin{IEEEbiography}[{\includegraphics[width=1in,height=1.25in,clip]{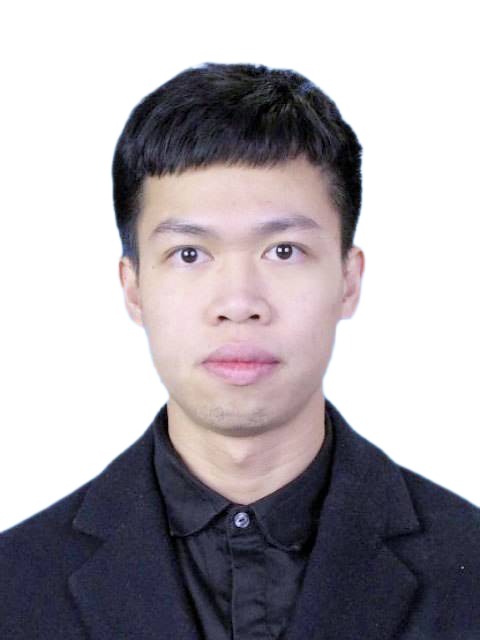}}]{Minrui Xu}
received the B.S. degree from Sun Yat-Sen University, Guangzhou, China, in 2021. He is currently working toward the Ph.D. degree in the School of Computer Science and Engineering, Nanyang Technological University, Singapore. His research interests mainly focus on Metaverse, deep reinforcement learning, and mechanism design.
 \end{IEEEbiography}
\vspace{-0.5cm}

\vskip 0pt plus -1 fil
\begin{IEEEbiography} [{\includegraphics[width=1in,height=1.25in,clip,keepaspectratio]{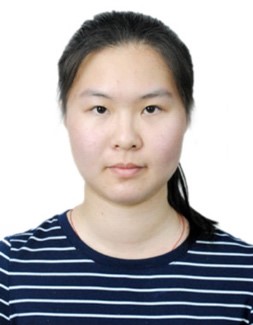}}]{Jiale Guo}
received the B.Sc. from Shandong University, China, in 2017, and the Ph.D. degree in computer science from Nanyang Technological University, Singapore, in 2022. She is currently a Research Fellow with the Strategic Centre for Research in Privacy-Preserving Technologies and Systems (SCRiPTS), Nanyang Technological University. Her research interests include privacy-preserving machine learning and cyber security. 
\end{IEEEbiography}
\vspace{-0.5cm}

\vskip 0pt plus -1 fil
\begin{IEEEbiography} [{\includegraphics[width=1in,height=1.25in,clip,keepaspectratio]{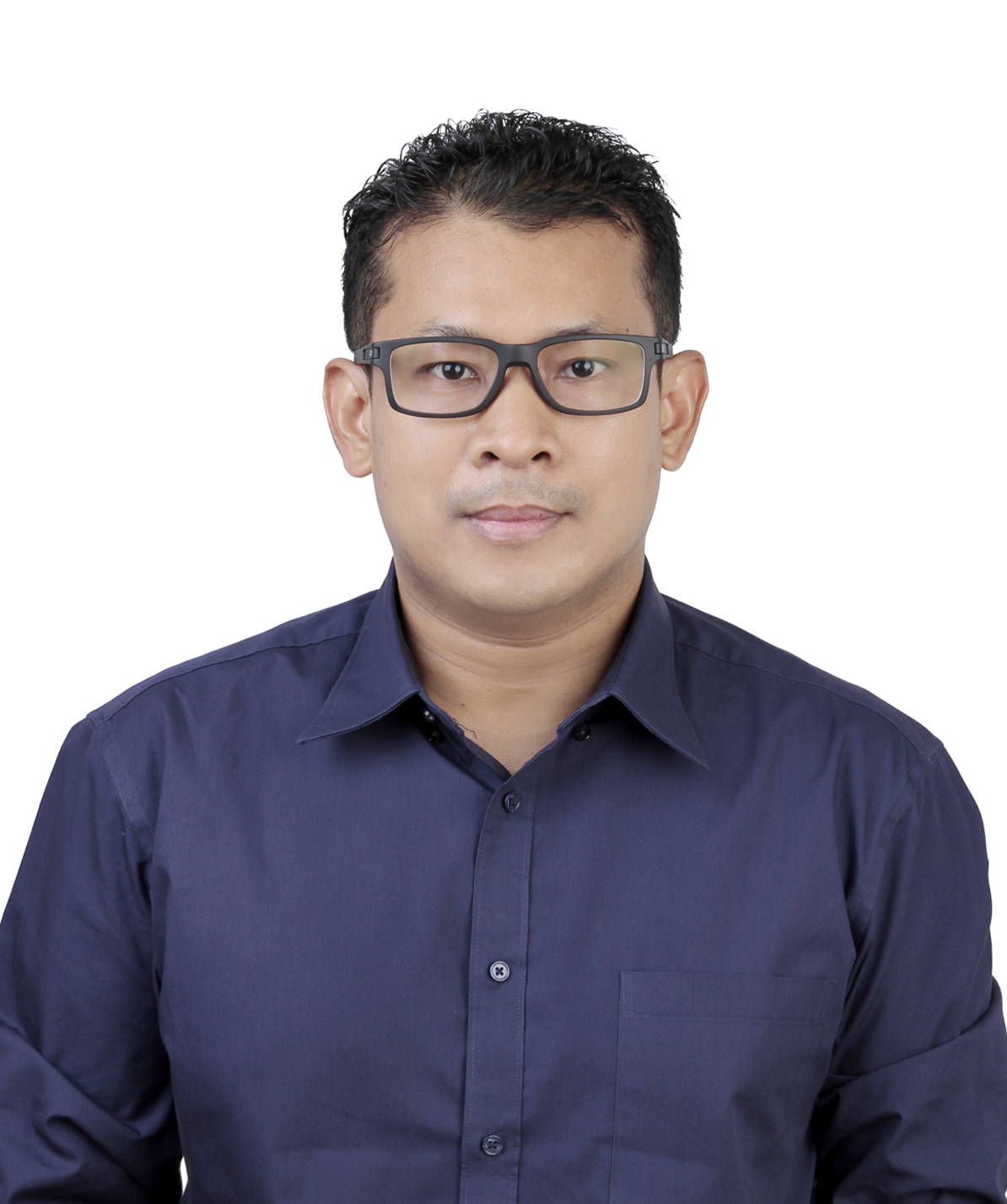}}]{Lwin Khin Shar}
is an Associate Professor in the School of Computing and Information Systems at Singapore Management University, Singapore. He received his Ph.D. degree in Software Engineering from Nanyang Technological University (NTU), Singapore in 2014.  He was a postdoctoral research associate at SnT of the University of Luxembourg and then a research scientist at NTU. His research interests span software engineering, security \& privacy, and machine learning, while specializing in analysis of web \& mobile applications and recently cyber-physical systems for detecting security vulnerabilities, privacy issues, malware, and anomalies. He is author or coauthor of more than 40 research papers published in international journals and conferences/workshops, including top venues (e.g., IEEE-TSE, IEEE-TDSC, EMSE, ICSE, FSE, ASE).  In his research, he often collaborates with industry partners spanning from healthcare and traffic management to Government sectors.
\end{IEEEbiography}
\vskip 0pt plus -1 fil
\vspace{-0.5cm}

\begin{IEEEbiography} [{\includegraphics[width=1in,height=1.25in,clip,keepaspectratio]{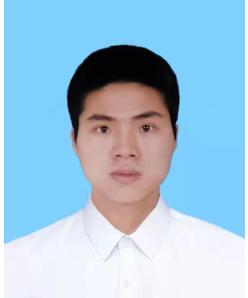}}]{Jiawen Kang} received the Ph.D. degree from the Guangdong University of Technology, China in 2018. He was a postdoc at Nanyang Technological University, Singapore from 2018 to 2021. He currently is a professor at Guangdong University of Technology, China. His research interests mainly focus on blockchain, security, and privacy protection in wireless communications and networking.
\end{IEEEbiography}
\vspace{-0.5cm}

\vskip 0pt plus -1 fil
\begin{IEEEbiography} [{\includegraphics[width=1in,height=1.25in,clip,keepaspectratio]{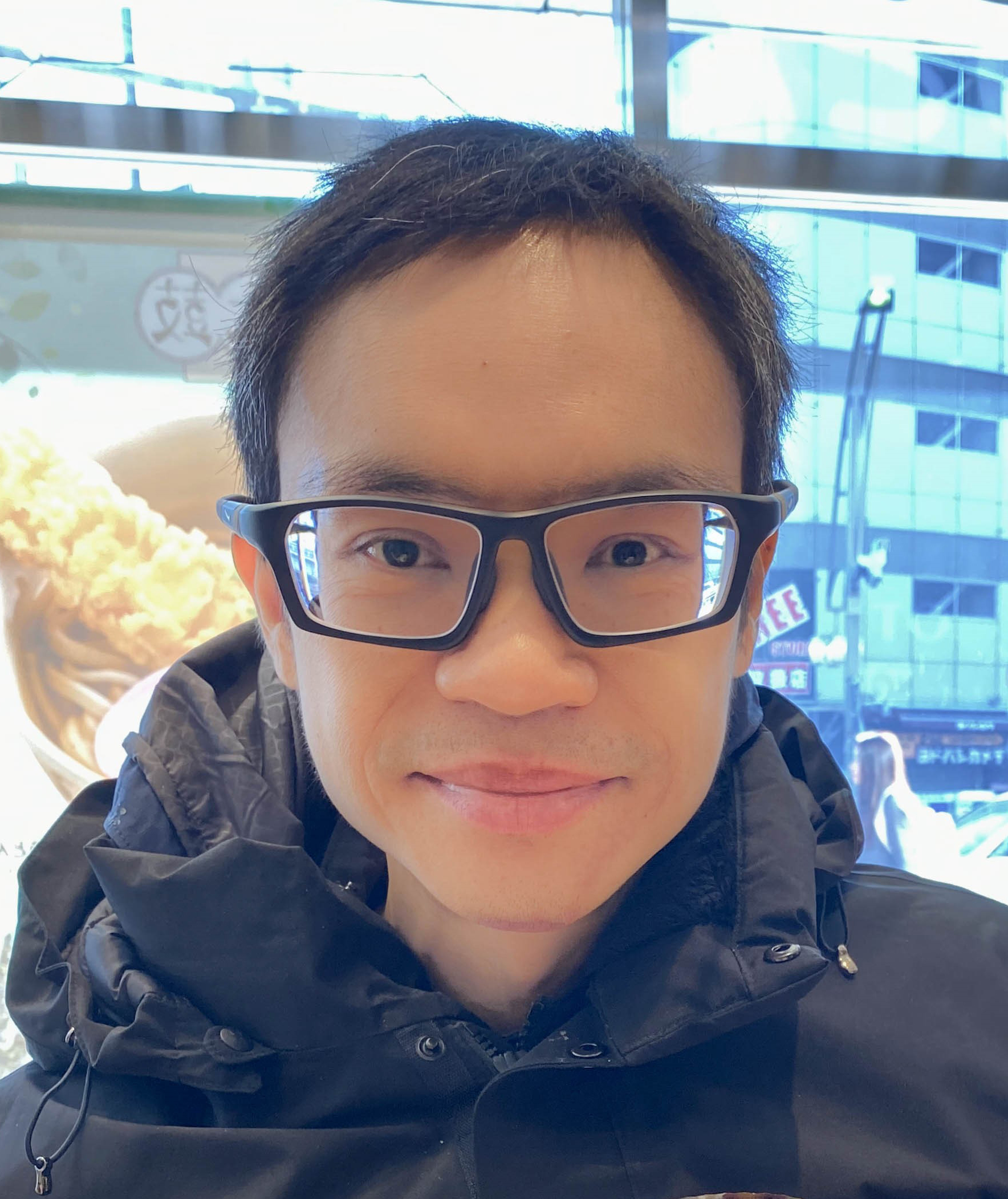}}]{Dusit Niyato} (Fellow, IEEE) is currently a professor in the School of Computer Science and Engineering, at Nanyang Technological University, Singapore. He received B.Eng. from King Mongkut's Institute of Technology Ladkrabang (KMITL), Thailand in 1999 and the Ph.D. in Electrical and Computer Engineering from the University of Manitoba, Canada in 2008. His research interests are in the areas of the Internet of Things (IoT), machine learning, and incentive mechanism design.
\end{IEEEbiography}
\vspace{-0.5cm}

\vskip 0pt plus -1 fil
\begin{IEEEbiography} [{\includegraphics[width=1in,height=1.25in,clip,keepaspectratio]{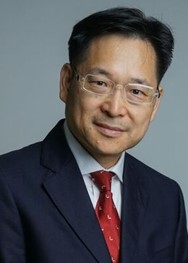}}]{Kwok-Yan Lam}
(Senior Member, IEEE) received his B.Sc. degree (1st Class Hons.) from University of London, in 1987, and Ph.D. degree from University of Cambridge, in 1990. He is the Associate Vice President (Strategy and Partnerships) and Professor in the School of Computer Science and Engineering at the Nanyang Technological University, Singapore. He is currently also the Director of the Strategic Centre for Research in Privacy-Preserving Technologies and Systems (SCRiPTS). From August 2020, he is on part-time secondment to the INTERPOL as a Consultant at Cyber and New Technology Innovation. Prior to joining NTU, he has been a Professor of the Tsinghua University, PR China (2002–2010) and a faculty member of the National University of Singapore and the University of London since 1990. He was a Visiting Scientist at the Isaac Newton Institute, Cambridge University, and a Visiting Professor at the European Institute for Systems Security. In 1998, he received the Singapore Foundation Award from the Japanese Chamber of Commerce and Industry in recognition of his research and development achievement in information security in Singapore. His research interests include Distributed Systems, Intelligent Systems, IoT Security, Distributed Protocols for Blockchain, Homeland Security and Cybersecurity.
\end{IEEEbiography}

\end{document}